\newcommand{\p}{\prime}
\DeclareMathOperator*{\E}{E}
\theoremstyle{plain}
\newtheorem{proposition}{Proposition}
\newtheorem{corollary}{Corollary}
\newtheorem{lemma}{Lemma}
\theoremstyle{definition}
\newtheorem*{unmarkedassumption}{Assumption}
\newtheorem{assumptionSPM}{Assumption}
\newtheorem{assumptionSBM}{Assumption}
\newtheorem{assumptionSPMIV}{Assumption}
\newtheorem{assumptionSPMTS}{Assumption}
\newtheorem{assumptionQSPMIV}{Assumption}
\newtheorem{assumptionSI}{Assumption}
\newcommand{\subjclassname@JEL}{JEL classification}
\theoremstyle{remark}
\numberwithin{equation}{section}
\begin{document} 

\title{Complementarity and Identification} 
\author{Tate Twinam}
\address{Interdisciplinary Arts and Sciences \\ University of Washington Bothell}
\date{\today} 
\email{twinam@uw.edu}
\keywords{Partial identification, nonparametric bounds, supermodularity, instrumental variables, program evaluation}
\subjclass[JEL]{C21, C26}
\thanks{I am indebted to Arie Beresteanu for his support and feedback on multiple drafts. Additionally, constructive comments from two anonymous referees and the editors Victor Chernozhukov and Peter Phillips substantially improved the quality of the paper.}

\maketitle

\newpage

\begin{abstract}

This paper examines the identification power of assumptions that formalize the notion of complementarity in the context of a nonparametric bounds analysis of treatment response. I extend the literature on partial identification via shape restrictions by exploiting cross--dimensional restrictions on treatment response when treatments are multidimensional; the assumption of \textit{supermodularity} can strengthen bounds on average treatment effects in studies of policy complementarity. This restriction can be combined with a statistical independence assumption to derive improved bounds on treatment effect distributions, aiding in the evaluation of complex randomized controlled trials. Complementarities arising from treatment effect heterogeneity can be incorporated through \textit{supermodular instrumental variables} to strengthen identification in studies with one or multiple treatments. An application examining the long--run impact of zoning on the evolution of urban spatial structure illustrates the value of the proposed identification methods.

\end{abstract}

\newpage

\doublespacing
\allowdisplaybreaks[4]

\begin{section}{Introduction}\label{intro} 

Complementarities arise naturally in many economic problems, often manifesting as policy interactions or treatment effect heterogeneity among observed subgroups of a population. This paper examines how assumptions that formalize the notion of complementarity can aid in the identification of treatment effects. The analysis employs a nonparametric bounds approach, where identification is driven by qualitative restrictions rooted in economic theory or empirical evidence rather than strong functional form or unconfoundedness assumptions. This approach will yield interval estimates of parameters of interest; however, informative bounds are often preferable to precise (but wrong) estimates obtained under incorrect assumptions. Partial identification tools have been fruitfully applied to a wide range of empirical problems.\footnote{Examples include: \cite{giustinelli2011non} and \cite{okumura2014concave} on the returns to education, \cite{kreider2007disability} on disability and employment, \cite{bhattacharya2012treatment, bhattacharya2008treatment} on the mortality effects of Swan--Ganz catheterization, \cite{kreider2009partially} on the effect of universal health insurance on medical expenditures,  \cite{pepper2000intergenerational} on the intergenerational transmission of welfare receipt, \cite{manski1998bounding} on sentencing and recidivism, and \cite{gundersen2012impact} on the health effects of the National School Lunch Program.}

In particular, I explore the identification power yielded by assuming that individual treatment response functions exhibit \textit{supermodularity} when treatments are multidimensional. This assumption allows one to construct more informative bounds in studies of policy complementarity, which are typically stymied by the absence of pseudo--experimental variation in the assignment of multiple treatments. Complementarities arising from interactions between treatment effects and observable covariates can be formalized as \textit{supermodular instrumental variables} to improve bounds on average treatment effects. This novel instrumental variable approach is broadly applicable to studies with one or multiple treatments. Complementarity is frequently invoked in economics, but studies of its identification power have been limited to very specific contexts. This paper develops general results applicable to program evaluation in a wide range of empirical situations, and their value is illustrated by an empirical application on the long--run effects of zoning on urban spatial structure.

Typically, empirical studies seek to estimate the effect of a single treatment on one or more outcome variables. However, the effect of a treatment may vary substantially with the value of other (endogenously--determined) treatment variables. When policymakers have multiple tools at their disposal, understanding how different policies enhance or offset each other is crucial. If the positive impact of some policy intervention is substantially larger when combined with a second (costly) intervention, a measure of the magnitude of this difference is necessary for a proper cost--benefit analysis. The supermodularity and submodularity assumptions proposed here can aid in quantifying how policy impacts differ with the associated policy environment.

For example, unemployment relief is a multidimensional policy, involving a choice of both potential benefit duration and the wage replacement rate. \cite{lalive2006how} show both theoretically and empirically that these two dimensions are complementary, with simultaneous increases in both the replacement rate and potential benefit duration leading to an increase in unemployment duration substantially larger than the sum of the effects measured individually for particular subgroups. The \citeauthor{lalive2006how} study exploits variation in both dimensions of unemployment relief that has the characteristics of a natural experiment, but such opportunities are very rare. Pseudo--experimental variation along multiple policy dimensions is far less common than similar variation in individual policies. This has arguably led to the overwhelming focus on the effects of policies in isolation. The partial identification tools developed here, which are applicable in the absence of any unusual pseudo--experimental policy variation, should enhance the ability of researchers to measure treatment effect heterogeneity due to policy complementarities in a wide range of contexts.\footnote{The sensitivity of effects to the surrounding policy environment may partly explain the wide variation in estimates of treatment effects for similar policies in different contexts found in many literatures; see, for example, the discussion in \cite{lalive2006how} on the effects of unemployment benefit policies on re--employment rates.} 

Relatedly, responses to a treatment may differ among subpopulations defined by observable covariates. Many recent experimental studies have discussed the importance of treatment effect heterogeneity between subgroups \citep{djebbari2008heterogeneous, bitler2006mean, bitler2008distributional, bitler2014can, feller2009beyond}. Qualitative information about such treatment effect heterogeneity leads naturally to supermodular instrumental variables, which can help narrow the bounds on average treatment effects in the same manner as a traditional instrumental variable or a monotone instrumental variable.\footnote{See \cite{manski2000monotone} and section \ref{instrumental variables}.} Supermodular instrumental variables can be applied in the case of a single treatment or multiple treatments, making them a potentially valuable addition to the range of identifying assumptions available to applied researchers.

While the bulk of the paper focuses on identification using non--experimental data, the assumptions developed in this paper can be applied in the evaluation of complex randomized controlled trials (RCTs) involving multiple treatments.   The structural assumptions introduced can be used to obtain stronger bounds on the (partially identified) distribution of treatment effects. Since average treatment effects are identified in this context, the supermodularity or submodularity of average effects can be established, and this can be used to provide some justification for the stronger structural assumptions. Similarly, the validity of supermodular instrumental variable assumptions can be established and used to justify stronger \textit{quantile supermodular instrumental variable} assumptions, which can also be applied in the case of a single treatment. 

The literature on partial identification is extensive.\footnote{See \cite{manski2003partial} for a comprehensive overview.} Many of the contributions of Charles Manski and coauthors are relevant to the results developed below and are reviewed as appropriate. The literature on complementarity and identification is relatively small. \cite{molinari2008identification} connect supermodularity to identification in the context of game estimation. They show that the approach of \cite{aradillas2008identification} applies to games with supermodular payoff functions. \cite{eeckhout2011identifying} find that they cannot identify (using wage data alone) whether or not the technology of a firm is supermodular, i.e., whether or not more productive workers sort towards more productive jobs. \cite{graham2014complementarity} analyze how reallocations of indivisible heterogeneous inputs across production units (leaving a potentially complementary input fixed) may affect average output. They discuss identification and estimation of the effects of a variety of correlated matching rules. \cite{lazzati2014treatment} uses monotone comparative statics to partially identify treatment response in the presence of endogenous social interactions. The shape restrictions proposed have previously been used in the context of estimation to improve efficiency; \cite{beresteanu2005nonparametric, beresteanu2007nonparametric} considers the efficiency gains from imposing a variety of restrictions, including supermodularity and submodularity.

\end{section}

\begin{section}{Notation and Setup}\label{Notation and Setup}

Individuals are drawn from a population $I$. The set $I$, the Borel $\sigma$--algebra of subsets of $I$ denoted by $\mathcal{I}$, and the probability measure $P$ together form a probability space $\left(I, \mathcal{I}, P\right)$. Every individual $i\in I$ is associated with a vector of covariates $x_{i} \in X$ and a vector of realized treatments $z_{i} \in T$, where $T$ is the treatment set.\footnote{I reserve superscripts to refer to vector components.} Since I focus on the identification of treatment effects in the presence of multiple treatments, the following assumption is imposed on the structure of the treatment space:

\begin{unmarkedassumption}
The treatment space $T$ is such that
\begin{itemize}

\item $T \subseteq \mathbb{R}^{L}$ with $L\in\mathbb{N}$,

\item $T$ is partially ordered under the product order,\footnote{If $t_{1}$ and $t_{2}$ are incomparable, I write $t_{1} \parallel t_{2}$.} and

\item $T$ is a nonempty lattice. 

\end{itemize}
\end{unmarkedassumption}

The lattice assumption means that, for any $t_{1}, t_{2}\in T$, $T$ contains the join (least upper bound) of $t_{1}$ and $t_{2}$, denoted by $t_{1} \vee t_{2}$, as well as the meet (greatest lower bound) of $t_{1}$ and $t_{2}$, denoted by $t_{1} \wedge t_{2}$.\footnote{The meet and join operations depend on the particular order imposed on the lattice; for example, the join of $(2,0)$ and $(1,1)$ in $\mathbb{R}^2$ is equal to $(2,1)$ under the product order and $(2,0)$ under the lexicographic order.} Examples of lattices include $\mathbb{R}^2$, $\mathbb{Z}\times\mathbb{R}$, and $\left\{0,1\right\}^n$ for $n\in\mathbb{N}$. An element $t_{1}$ of a lattice $T$ is the top (bottom) of $T$ if $t_{2} \leq t_{1}$ $\left(t_{1} \leq t_{2}\right)$ for all $t_{2}\in T$; if $t_{1}$ is not the top or bottom, it is in the interior. If the top (or bottom) of a lattice exists, it is unique. A subset $S\subseteq T$ is a sublattice of $T$ if, for any $t_{1}, t_{2}\in S$, $S$ contains the meet and join of $t_{1}$ and $t_{2}$ in $T$. The advantage of the lattice assumption is the notational clarity it provides when employing supermodularity and submodularity assumptions. In this paper, I restrict attention to discrete treatments, as these are most commonly encountered in practice. Dimensions of the treatment may be binary or multi--valued \citep{cattaneo2010multi}. 

Every individual $i$ is associated with a (measurable) response function $y_{i}\left(\cdot\right): T \rightarrow Y\in\mathbb{R}$ mapping treatments into outcomes $y_{i}\left(t\right)$.\footnote{I suppress $i$ when referring to arbitrary response functions, covariates, or realized treatments.} $z_{i} \in T$ is the treatment that $i$ actually receives, so $y_{i}\left(z_{i}\right)$ is individual $i$'s realized outcome, $\left\{y_{i}\left(t\right)\right\}_{t\neq z_{i}}$ are individual $i$'s counterfactual outcomes, and $\left\{y_{i}\left(t\right)\right\}_{t\in T}$ are individual $i$'s potential outcomes.\footnote{The stable unit treatment value assumption (alternatively referred to as noninterference by \cite{cox1958planning} and individualistic treatment response by \cite{manski2013identification}) is maintained throughout the paper; this states that individuals' potential outcomes $\left\{y\left(t\right)\right\}_{t\in T}$ do not depend on other individuals' realized treatments \citep{rubin1978bayesian}.} Throughout, I assume that there exist  $\underline{K},\overline{K}\in\mathbb{R}$ such that $\underline{K} \leq y\left(t\right) \leq \overline{K}$ for all $t$; these are global bounds on response functions. In the absence of these global bounds, the results below will generally be uninformative. All well--defined expectations are assumed to exist.\footnote{If an expectation $\E\left[\, y\left(t_{1}\right) \mid z=t_{2} \,\right]$ is ill--defined because the event $z=t_{2}$ is off the support of $z$, I establish the convention that $\E\left[\, y\left(t_{1}\right) \mid z=t_{2} \,\right]P\left(z=t_{2}\right) \equiv 0$.}

\end{section}

\begin{section}{Shape Restrictions}\label{shape restrictions}

In this section, I explore the identifying power of shape restrictions that formalize complementarity and substitutability, with an emphasis on the identification of average treatment effects. Shape restrictions proposed in the previous literature are reviewed before moving on to the novel restrictions proposed here. Using these assumptions, I derive bounds on average treatment effects for both simple and complex treatment spaces. 

\cite{manski1989anatomy} introduced the no--assumption bounds on $E\left[\, y\left(t\right) \,\right]$. The no--assumption upper bound is the average of $E\left[\, y\left(t\right) \mid z = t \,\right]$ and the global upper bound $\overline{K}$, weighted respectively by $P\left(z=t\right)$ and $P\left(z\neq t\right)$; likewise for the lower bound. Since they are typically wide, research has focused on other credible assumptions that yield additional identifying power. 

\cite{manski1997monotone} studied the identification power of assumptions on the shape of individual response functions; in particular, he considered restricting response functions to be monotone, semi--monotone, or concave--monotone. Semi--monotone treatment response (SMTR), which I employ below, requires response functions to be weakly increasing in $t$. SMTR has the same identification power regardless of whether $T\subseteq \mathbb{R}$ or $T\subseteq \mathbb{R}^L$ for $L>1$, except that in the latter case, it is possible that $t_{1} \parallel t_{2}$. \cite{bhattacharya2008treatment} derives bounds using SMTR without assuming a particular direction of monotonicity. \cite{okumura2014concave} study the identification power of concave--monotone treatment response combined with monotone treatment selection (discussed in section \ref{instrumental variables}). 

SMTR is a within--dimension restriction on the response functions. Additional identification power can be obtained from cross--dimension restrictions, where the marginal effect of a change in some dimensions of the treatment variable depends on the values of the other dimensions:

\begin{assumptionSPM}[Supermodularity]\label{SPM}
Response functions are \textit{supermodular} on a sublattice $S\subseteq T$ if, for all $t_{1}, t_{2}\in S$,
\begin{equation}\label{spm}
y\left(t_{2}\right) + y\left(t_{1}\right) \leq y\left(t_{1} \vee t_{2}\right) + y\left(t_{1} \wedge t_{2}\right) 
\end{equation}
\end{assumptionSPM}

\begin{assumptionSBM}[Submodularity]\label{SBM}
Response functions are \textit{submodular} on a sublattice $S\subseteq T$ if, for all $t_{1}, t_{2}\in S$,
\begin{equation}\label{sbm}
y\left(t_{2}\right) + y\left(t_{1}\right) \geq y\left(t_{1} \vee t_{2}\right) + y\left(t_{1} \wedge t_{2}\right)
\end{equation}
\end{assumptionSBM}

\ref{SPM} is a formalization of the notion of complementarity. If two dimensions of a treatment $t = \left(t^{1},t^{2}\right)$ are complementary, then the magnitude of the change in the response variable due to an increase in the first dimension $t^{1}$ is increasing with $t^{2}$. Thus, each component of the treatment amplifies the marginal effect of the other component. In the case of a linear model
\begin{equation}\label{linear}
y = \alpha + \beta t^{1} + \delta t^{2} + \gamma t^{1} t^{2}
\end{equation}
supermodularity is equivalent to the sign restriction $\gamma \geq 0$. More generally, if $y$ is a (sufficiently smooth) nonlinear function of the treatment, supermodularity can be interpreted as a nonnegativity restriction on the cross--partial derivative $\frac{\partial^2 y}{\partial t^{1} \partial t^{2}}$. \ref{SBM} is a formalization of substitutability, the case where elements of the treatment may mitigate each others effects. In the linear model above, submodularity is equivalent to the sign restriction $\gamma\leq 0$. If both supermodularity and submodularity hold, response functions are said to be \textit{modular}. Since assumptions \ref{SPM} and \ref{SBM} can be applied on sublattices of $T$, it is possible to allow some dimensions of a treatment to be complements while those same dimensions are substitutes with other dimensions. 

\cite{neumark2011does} provide an example of policy complementarity in a study on the interaction between the Earned Income Tax Credit (EITC) and the minimum wage. They find that a higher minimum wage enhances the positive effect of the EITC on the labor supply of single mothers; they find the opposite effect for childless individuals, suggesting a crowding--out effect. These findings suggest that assumptions \ref{SPM} and \ref{SBM}, respectively for each subgroup, could be applied in other studies on how the effect of minimum wage changes are influenced by the EITC or similar programs. Another naturally multidimensional policy is zoning. Zoning laws typically regulate many aspects of the built environment; most broadly, they regulate both what types of uses are allowed (commercial, industrial, etc.) and how densely land can be developed (lot coverage of buildings, maximum height, etc.). The effects of specific zoning policies may vary with the overall policy bundle, which I explore further in the empirical application in section \ref{emp}.

The amount one can learn about $E\left[\, y\left(t_{1}\right) \,\right]$ or $E\left[\, y\left(t_{1}\right) - y\left(t_{2}\right) \,\right]$ from the data alone depends on $P\left(z=t_{1}\right)$ and, in the latter case, $P\left(z=t_{2}\right)$. If $P\left(z=t_{1}\right)$ is small, the data are practically uninformative about $E\left[\, y\left(t_{1}\right) \,\right]$.\footnote{In the case where one or more of the dimensions of the treatment are continuous, the data are necessarily uninformative about almost all of the treatments. This motivates my restriction to discretely--valued treatments.} Thus, the researcher faces a trade--off where richer treatment spaces (which entail a larger number of treatments) allow for more interesting questions but generally lead to less precise answers. Adding ``nuisance'' dimensions to the treatment space that allow for the application of additional \ref{SPM} or \ref{SBM} assumptions will generally not aid in the identification of treatment effects of interest. 

In propositions \ref{spm bounds 0} and \ref{spm bounds 2}, I show how \ref{SPM} and \ref{SBM} can be used to compute bounds on the expectations of average treatment effects. In general, these bounds will improve upon the no--assumption bounds in the case of multidimensional treatments; with only a single treatment, \ref{SPM} and \ref{SBM} have no identifying power. The simplest nontrivial lattice treatment space is $T = \left\{\left(0,0\right), \left(1,0\right), \left(0,1\right), \left(1,1\right)\right\}$, which corresponds to two binary treatments. The following result shows the implications of supermodularity for identification on this simple treatment space:
\begin{proposition}\label{spm bounds 0}
Assume that $T = \left\{\left(0,0\right), \left(1,0\right), \left(0,1\right), \left(1,1\right)\right\}$. Assume that \ref{SPM} holds on $T$. Then, the bounds
\begin{gather}
\E\left[\, y\left(1,0\right) \mid z=\left(1,0\right) \,\right] P\left(z = \left(1,0\right)\right) + \underline{K}P\left(z \neq \left(1,0\right)\right) \nonumber \\
- \E\left[\, y\left(0,0\right) \mid z=\left(0,0\right) \,\right] P\left(z = \left(0,0\right)\right) - \overline{K} P\left(z \neq \left(0,0\right)\right) \nonumber \\
\leq \E\left[\, y\left(1,0\right) - y\left(0,0\right) \,\right] \leq \label{spm zero 1} \\
\E\left[\, y\left(1,1\right) \mid z=\left(1,1\right) \,\right] P\left(z = \left(1,1\right)\right) + \E\left[\, y\left(1,0\right) \mid z=\left(1,0\right) \,\right] P\left(z = \left(1,0\right)\right) \nonumber \\ 
+ \overline{K} P\left(z\in \left\{\left(0,0\right), \left(0,1\right)\right\}\right) - \E\left[\, y\left(0,1\right) \mid z=\left(0,1\right) \,\right] P\left(z = \left(0,1\right)\right) \nonumber \\
- \E\left[\, y\left(0,0\right) \mid z=\left(0,0\right) \,\right] P\left(z = \left(0,0\right)\right)  - \underline{K}P\left(z\in\left\{\left(1,0\right), \left(1,1\right)\right\}\right) \nonumber
\end{gather}
and 
\begin{gather}
\E\left[\, y\left(1,1\right) \mid z=\left(1,1\right) \,\right] P\left(z = \left(1,1\right)\right) + \E\left[\, y\left(1,0\right) \mid z=\left(1,0\right) \,\right] P\left(z = \left(1,0\right)\right) \nonumber \\ 
+ \underline{K} P\left(z\in\left\{\left(0,0\right), \left(0,1\right)\right\}\right) - \E\left[\, y\left(0,1\right) \mid z= \left(0,1\right) \,\right] P\left(z = \left(0,1\right)\right) \nonumber \\ 
- \E\left[\, y\left(0,0\right) \mid z=\left(0,0\right) \,\right] P\left(z = \left(0,0\right)\right) - \overline{K} P\left(z\in\left\{\left(1,0\right), \left(1,1\right)\right\}\right) \nonumber \\
\leq \E\left[\, y\left(1,1\right) - y\left(0,1\right) \,\right] \leq \label{spm zero 2} \\
\E\left[\, y\left(1,1\right) \mid z=\left(1,1\right) \,\right] P\left(z = \left(1,1\right)\right) + \overline{K}P\left(z \neq \left(1,1\right)\right)  \nonumber \\
- \E\left[\, y\left(0,1\right) \mid z=\left(0,1\right) \,\right] P\left(z = \left(0,1\right)\right) - \underline{K} P\left(z \neq \left(0,1\right)\right) \nonumber 
\end{gather}
are sharp.\footnote{There is no guarantee that these bounds will be nonempty; if an assumption implies that the bounds on the parameter of interest are empty, the assumption can be falsified by the data. This caveat applies to all the results that follow.} The no--assumption bounds remain sharp for $\E\left[\, y\left(1,1\right) - y\left(0,0\right) \,\right]$, $\E\left[\, y\left(1,0\right) - y\left(0,1\right) \,\right]$, $\E\left[\, y\left(0,1\right) - y\left(1,0\right) \,\right]$, and each average potential outcome $\E\left[\, y\left(\cdot\right) \,\right]$ defined on $T$.
\end{proposition}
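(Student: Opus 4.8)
The plan is to use the fact that on this four--point lattice \ref{SPM} collapses to a single individual--level inequality. The only incomparable pair is $(1,0)$ and $(0,1)$, with join $(1,1)$ and meet $(0,0)$; every other pair in $T$ is comparable, so \eqref{spm} holds with equality and imposes nothing. Thus \ref{SPM} on $T$ is equivalent to demanding, for each individual, $y(1,0)+y(0,1)\le y(1,1)+y(0,0)$, equivalently $y(1,0)-y(0,0)\le y(1,1)-y(0,1)$. Because this restriction is pointwise and every target is a linear functional of the potential outcomes, I would obtain each sharp bound by conditioning on the realized treatment, writing $\E[\,y(t)-y(t^{\p})\,]=\sum_{s\in T}\E[\,y(t)-y(t^{\p})\mid z=s\,]\,P(z=s)$, and within each group $z=s$ optimizing the conditional objective over the latent outcomes, holding the observed outcome $y(s)$ fixed, subject to the box constraint $\underline{K}\le y(\cdot)\le\overline{K}$ and the lone \ref{SPM} inequality. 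Since that inequality is individual--level, the groupwise optimizers assemble into one data-- and \ref{SPM}--consistent joint law, so the sum of the groupwise optima equals the sharp bound and is attained.

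For the upper bound in \eqref{spm zero 1} I would maximize $y(1,0)-y(0,0)$ group by group using $y(1,0)-y(0,0)\le y(1,1)-y(0,1)$. In groups $z=(0,0)$ and $z=(1,0)$ the box binds: setting the unobserved member of the contrast to its extreme ($y(1,0)=\overline{K}$, respectively $y(0,0)=\underline{K}$) is \ref{SPM}--feasible, yielding contributions $\overline{K}P(z=(0,0))-\E[\,y(0,0)\mid z=(0,0)\,]P(z=(0,0))$ and $\E[\,y(1,0)\mid z=(1,0)\,]P(z=(1,0))-\underline{K}P(z=(1,0))$. The new content lies in groups $z=(0,1)$ and $z=(1,1)$, where \ref{SPM} binds: the observed outcome enters the cap $y(1,1)-y(0,1)$, forcing $y(1,0)-y(0,0)\le\overline{K}-y(0,1)$ in the first and $\le y(1,1)-\underline{K}$ in the second. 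Summing the four contributions reproduces \eqref{spm zero 1}. For the lower bound I would minimize instead; since \ref{SPM} only caps $y(1,0)-y(0,0)$ from above, the no--assumption minimizer (small $y(1,0)$, large $y(0,0)$) is automatically \ref{SPM}--feasible once the free outcomes relax the inequality, so the no--assumption lower bound is sharp. The bounds \eqref{spm zero 2} on $\E[\,y(1,1)-y(0,1)\,]$ follow by the mirror argument: \ref{SPM} now bounds $y(1,1)-y(0,1)$ from below, tightening the lower bound (with $\overline{K}$ and $\underline{K}$ interchanged) while leaving the no--assumption upper bound sharp.

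For the remaining claims I would show the no--assumption extremal configurations stay \ref{SPM}--feasible, so \ref{SPM} adds nothing. The contrast $y(1,1)-y(0,0)$ uses only the two comparable corners, which sit on the right of \eqref{spm}; the inequality bounds $y(1,1)+y(0,0)$ from below but never $y(1,1)-y(0,0)$. Maximizing (minimizing) it sets the latent copies of $y(1,1),y(0,0)$ to $\overline{K},\underline{K}$ (resp.\ the reverse), and the remaining free outcomes $y(1,0),y(0,1)$ can be pushed toward $\underline{K}$ (resp.\ $\overline{K}$) so that \eqref{spm} becomes trivially true; hence both no--assumption bounds are attained. The same device handles each average potential outcome $\E[\,y(t)\,]$: fixing the target outcome at its extreme and choosing the other three to relax the single inequality is always possible since $\underline{K}\le\overline{K}$, so the Manski no--assumption bounds remain sharp.

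The step I expect to be the main obstacle is the optimality verification in the maximization for \eqref{spm zero 1} (and the mirror minimization for \eqref{spm zero 2}): deciding, group by group, whether the box or the \ref{SPM} inequality binds. This rests on sign comparisons such as $y(1,0)\le\overline{K}$ and $y(0,1)\ge\underline{K}$, which settle that \ref{SPM} caps the contrast in groups $z=(0,1),(1,1)$ while the box does so in groups $z=(0,0),(1,0)$. One must further check that the latent assignments achieving these caps lie in $[\underline{K},\overline{K}]$, satisfy \eqref{spm} (with equality), and can be chosen consistently across all four groups, so that a single feasible joint law attains the bound---this consistency is what upgrades the groupwise optima to genuine sharpness.
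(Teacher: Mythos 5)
Your proposal is correct and takes essentially the same approach as the paper: it reduces \ref{SPM} on this lattice to the single individual--level inequality $y(1,0)-y(0,0)\le y(1,1)-y(0,1)$, derives sharp identification regions for the contrast group by group (box binding when $z\in\{(0,0),(1,0)\}$, the \ref{SPM} inequality binding when $z\in\{(0,1),(1,1)\}$, and symmetrically for $y(1,1)-y(0,1)$), and averages via the law of iterated expectations, exactly as in the paper's case tables \eqref{spm zero 1 proof}--\eqref{spm zero 3 proof}. Your feasibility checks showing that the no--assumption extremal configurations remain \ref{SPM}--consistent for $\E\left[\, y(1,1)-y(0,0) \,\right]$ and for each $\E\left[\, y(t) \,\right]$ likewise match the paper's argument, with your explicit remark on assembling groupwise optimizers into one joint law making the sharpness step slightly more careful than the paper's implicit treatment.
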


\begin{proof}[Proof of proposition \ref{spm bounds 0}]

First, I show that \ref{SPM} does not improve upon the no--assumption bounds on potential outcomes. \ref{SPM} implies that 
\[
y\left(1,0\right) + y\left(0,1\right) \leq y\left(1,1\right) + y\left(0,0\right)
\]
For each $i$, exactly one of these outcomes is observed. The unobserved terms may take any value in $\left[\underline{K}, \overline{K}\right]$. Consider bounding the average potential outcome $\E\left[\, y\left(1,0\right) \,\right]$. When $z \neq \left(1,0\right)$, there are three cases to consider. If $z = \left(1,1\right)$, then \ref{SPM} implies
\[
y\left(1,0\right) \leq \overline{K} \leq y\left(1,1\right) + \overline{K} - \underline{K} 
\]
If $z = \left(0,1\right)$, then
\[
y\left(1,0\right) \leq \overline{K} \leq \overline{K} + \overline{K} - y\left(0,1\right)
\]
If $z = \left(0,0\right)$, then
\[
y\left(1,0\right) \leq \overline{K} \leq \overline{K} + y\left(0,0\right) - \underline{K}
\]
All three of these inequalities are implied without \ref{SPM}, so the assumption is nonbinding. Thus, it follows that
\begin{equation*}
y\left(1,0\right) \in
\begin{cases}
\left\{y\left(1,0\right)\right\} & \text{if } z = \left(1,0\right) \\
\left[\underline{K}, \overline{K}\right] & \text{if } z \in \left\{\left(0,0\right), \left(0,1\right), \left(1,1\right)\right\}
\end{cases}
\end{equation*}
Taking expectations yields the no--assumption bounds. A similar argument applies to the other elements of $T$.

The \ref{SPM} inequality does permit strengthened identification results for treatment effects. In the no--assumption case, if $z = \left(1,1\right)$ or $z = \left(0,1\right)$, then $y\left(1,0\right) - y\left(0,0\right) \in \left[\underline{K} - \overline{K}, \overline{K} - \underline{K}\right]$. Under \ref{SPM}, the fact that we observe one of $\left\{y\left(1,1\right),y\left(0,1\right)\right\}$ allows us to further reduce this upper bound. Sharp bounds for the treatment effects $y\left(1,0\right) - y\left(0,0\right)$, $y\left(1,1\right) - y\left(0,1\right)$, and $y\left(1,1\right) - y\left(0,0\right)$ are given below.

\begin{equation}\label{spm zero 1 proof}
y\left(1,0\right) - y\left(0,0\right) \in 
\begin{cases}
\left[\underline{K} - y\left(z\right), \overline{K} - y\left(z\right)\right] & \text{if } z = \left(0,0\right) \\
\left[y\left(z\right) - \overline{K}, y\left(z\right) - \underline{K}\right] & \text{if } z = \left(1,0\right) \\
\left[\underline{K} - \overline{K}, \overline{K} - y\left(z\right)\right] & \text{if } z = \left(0,1\right) \\
\left[\underline{K} - \overline{K}, y\left(z\right) - \underline{K}\right] & \text{if } z = \left(1,1\right) 
\end{cases}
\end{equation}

\begin{equation}\label{spm zero 2 proof}
y\left(1,1\right) - y\left(0,1\right) \in 
\begin{cases}
\left[\underline{K} - y\left(z\right), \overline{K} - \underline{K}\right] & \text{if } z = \left(0,0\right) \\
\left[y\left(z\right) - \overline{K}, \overline{K} - \underline{K}\right] & \text{if } z = \left(1,0\right) \\
\left[\underline{K} - y\left(z\right), \overline{K} - y\left(z\right)\right] & \text{if } z = \left(0,1\right) \\
\left[y\left(z\right) - \overline{K}, y\left(z\right) - \underline{K}\right] & \text{if } z = \left(1,1\right) \\
\end{cases}
\end{equation}

\begin{equation}\label{spm zero 3 proof}
y\left(1,1\right) - y\left(0,0\right) \in 
\begin{cases}
\left[\underline{K} - y\left(z\right), \overline{K} - y\left(z\right)\right] & \text{if } z = \left(0,0\right) \\
\left[\underline{K} - \overline{K}, \overline{K} - \underline{K}\right] & \text{if } z = \left(1,0\right) \\
\left[\underline{K} - \overline{K}, \overline{K} - \underline{K}\right] & \text{if } z = \left(0,1\right) \\
\left[y\left(z\right) - \overline{K}, y\left(z\right) - \underline{K}\right] & \text{if } z = \left(1,1\right) \\
\end{cases}
\end{equation}

Taking expectations in equations \eqref{spm zero 1 proof} and \eqref{spm zero 2 proof} yields the bounds in \eqref{spm zero 1} and \eqref{spm zero 2}, respectively. Equation \eqref{spm zero 3 proof} shows that the no--assumption bounds remain sharp for $\E\left[\, y\left(1,1\right) - y\left(0,0\right) \,\right]$. A similar analysis establishes that the no--assumption bounds remain sharp for $\E\left[\, y\left(0,1\right) - y\left(1,0\right) \,\right]$ and $\E\left[\, y\left(1,0\right) - y\left(0,1\right) \,\right]$.

\end{proof}

In proposition \ref{spm bounds 0}, assumption \ref{SPM} improves the upper bound on $\E\left[\, y\left(1,0\right) - y\left(0,0\right) \,\right]$ and the lower bound on $\E\left[\, y\left(1,1\right) - y\left(0,1\right) \,\right]$ by establishing a monotonicity relationship between the two treatment effects. Similar bounds for the treatment effects $\E\left[\, y\left(0,1\right) - y\left(0,0\right) \,\right]$ and $\E\left[\, y\left(1,1\right) - y\left(1,0\right) \,\right]$ can be obtained by permuting the order of the treatments and applying the result. As stated in the proposition, the no--assumption bounds on $\E\left[\, y\left(1,1\right) - y\left(0,0\right) \,\right]$, $\E\left[\, y\left(1,0\right) - y\left(0,1\right) \,\right]$, and $\E\left[\, y\left(0,1\right) - y\left(1,0\right) \,\right]$ remain sharp under \ref{SPM}. In the special case where $y$ is bounded between zero and one, \ref{SPM} can establish that $\E\left[\, y\left(1,0\right) - y\left(0,0\right) \,\right]\in \left[-1, 0\right]$ or $\E\left[\, y\left(1,1\right) - y\left(0,1\right) \,\right]\in\left[0,1\right]$ if the observed expectations in \eqref{spm zero 1} and \eqref{spm zero 2} take certain boundary values. In general, however, \ref{SPM} is not sufficient to identify the sign of a treatment effect in the absence of other assumptions.

Sharp bounds can be derived on more general treatment spaces using the same approach. For any $t_{1}$ and $t_{2}$, 
\begin{equation*}
\E\left[\, y\left(t_{1}\right) - y\left(t_{2}\right) \,\right] = \sum_{t_{3}\in T} \E\left[\, y\left(t_{1}\right) - y\left(t_{2}\right) \mid z = t_{3}\,\right] P\left(z=t_{3}\right)
\end{equation*}
Thus, bounds can be derived for $\E\left[\, y\left(t_{1}\right) - y\left(t_{2}\right) \,\right]$ by bounding $\E\left[\, y\left(t_{1}\right) - y\left(t_{2}\right) \mid z=t_{3}\,\right]$ for each $t_{3}$. Bounds on $\E\left[\, y\left(t_{1}\right) - y\left(t_{2}\right) \mid z=t_{3}\,\right]$ will differ depending on the data, the ordering of the three treatments, and whether \ref{SPM} and/or \ref{SBM} hold on any sublattices containing the treatments. To capture all of the information implied by the data and assumptions, one must examine every sublattice containing $t_{1}$, $t_{2}$, and $t_{3}$, determine which of \ref{SPM} and \ref{SBM} hold on the sublattice, and derive the implied bounds given the ordering of $t_{1}$, $t_{2}$, and $t_{3}$. 

To this end, I formally define a collection of sets of treatments (indexed by $t_{1}$ and $t_{2}$) which will allow for a derivation of the bounds on $\E\left[\, y\left(t_{1}\right) - y\left(t_{2}\right) \mid z=t_{3}\,\right]$ by identifying which assumptions (if any) imply tightened bounds and the form those bounds will take. Let $\left\{S_{\gamma}\right\}_{\gamma\in\Gamma}$ be the collection of all sublattices of $T$ such that, for every $\gamma\in\Gamma$, $\left|S_{\gamma}\right| = 4$ and $S_{\gamma}$ is not a chain. These are the minimal sublattices on which \ref{SPM} and \ref{SBM} can have any implications for identification. Let $\Gamma_{t_{1},t_{2}}$ denote the set of $\gamma$ such that $t_{1},t_{2}\in S_{\gamma}$, and let $\Gamma_{t_{1},t_{2}}^{\ref{SPM}}$ refer to the subset of $\Gamma_{t_{1},t_{2}}$ for which \ref{SPM} (but not \ref{SBM}) holds on $S_{\gamma}$; likewise, let $\Gamma_{t_{1},t_{2}}^{\ref{SBM}}$ refer to the subset where only \ref{SBM} holds on $S_{\gamma}$. Let $\Gamma_{t_{1},t_{2}}^{MOD}$ refer to the subset where both \ref{SPM} and \ref{SBM} hold. Then, one can define the following sets:
\begin{equation}\label{lambdas}
\begin{gathered}
\Lambda_1 = 
\left\{t_{3} \mid t_{2} < t_{1} < t_{3} \text{ and either } \exists \gamma\in\Gamma_{t_{1},t_{2}}^{MOD} \text{ s.t. } t_{3}\in S_{\gamma} \right. \\
\left. \text{ or } \exists \gamma\in\Gamma_{t_{1},t_{2}}^{\ref{SPM}}, \gamma^{\p}\in\Gamma_{t_{1},t_{2}}^{\ref{SBM}} \text{ s.t. } t_{3}\in S_{\gamma}\cap S_{\gamma^{\p}} \right\} \\
\Lambda_2 = 
\left\{t_{3} \mid t_{2} < t_{1} < t_{3}, \nexists \gamma\in\Gamma_{t_{1},t_{2}}^{MOD}\cup\Gamma_{t_{1},t_{2}}^{\ref{SBM}} \text{ s.t. } t_{3}\in S_{\gamma}, \right. \\
\left. \text{ and } \exists \gamma\in\Gamma_{t_{1},t_{2}}^{\ref{SPM}} \text{ s.t. } t_{3}\in S_{\gamma} \right\} \\
\Lambda_3 = 
\left\{t_{3} \mid t_{2} < t_{1} < t_{3}, \nexists \gamma\in\Gamma_{t_{1},t_{2}}^{MOD}\cup\Gamma_{t_{1},t_{2}}^{\ref{SPM}} \text{ s.t. } t_{3}\in S_{\gamma}, \right. \\
\left. \text{ and } \exists \gamma\in\Gamma_{t_{1},t_{2}}^{\ref{SBM}} \text{ s.t. } t_{3}\in S_{\gamma} \right\} \\
\Lambda_4 = 
\left\{t_{3} \mid t_{3} < t_{2} < t_{1} \text{ and either } \exists \gamma\in\Gamma_{t_{1},t_{2}}^{MOD} \text{ s.t. } t_{3}\in S_{\gamma} \right. \\
\left. \text{ or } \exists \gamma\in\Gamma_{t_{1},t_{2}}^{\ref{SPM}}, \gamma^{\p}\in\Gamma_{t_{1},t_{2}}^{\ref{SBM}} \text{ s.t. } t_{3}\in S_{\gamma}\cap S_{\gamma^{\p}} \right\} \\
\Lambda_5 = 
\left\{t_{3} \mid t_{3} < t_{2} < t_{1}, \nexists \gamma\in\Gamma_{t_{1},t_{2}}^{MOD}\cup\Gamma_{t_{1},t_{2}}^{\ref{SBM}} \text{ s.t. } t_{3}\in S_{\gamma}, \right. \\
\left. \text{ and } \exists \gamma\in\Gamma_{t_{1},t_{2}}^{\ref{SPM}} \text{ s.t. } t_{3}\in S_{\gamma} \right\} \\
\Lambda_6 = 
\left\{t_{3} \mid t_{3} < t_{2} < t_{1}, \nexists \gamma\in\Gamma_{t_{1},t_{2}}^{MOD}\cup\Gamma_{t_{1},t_{2}}^{\ref{SPM}} \text{ s.t. } t_{3}\in S_{\gamma}, \right. \\
\left. \text{ and } \exists \gamma\in\Gamma_{t_{1},t_{2}}^{\ref{SBM}} \text{ s.t. } t_{3}\in S_{\gamma} \right\} \\
\Lambda_7 = 
\left\{t_{3} \mid t_{3} < t_{1}, t_{3} \parallel t_{2}, \text{ and } \exists \gamma\in\Gamma_{t_{1},t_{2}}^{MOD} \text{ s.t. } t_{3}\in S_{\gamma} \right\} \\
\Lambda_8 = 
\left\{t_{3} \mid t_{3} < t_{1}, t_{3} \parallel t_{2}, \text{ and } \exists \gamma\in\Gamma_{t_{1},t_{2}}^{\ref{SPM}} \text{ s.t. } t_{3}\in S_{\gamma} \right\} \\
\Lambda_9 = 
\left\{t_{3} \mid t_{3} < t_{1}, t_{3} \parallel t_{2}, \text{ and } \exists \gamma\in\Gamma_{t_{1},t_{2}}^{\ref{SBM}} \text{ s.t. } t_{3}\in S_{\gamma} \right\} \\
\Lambda_{10} = 
\left\{t_{3} \mid t_{2} < t_{3}, t_{3} \parallel t_{1}, \text{ and } \exists \gamma\in\Gamma_{t_{1},t_{2}}^{MOD} \text{ s.t. } t_{3}\in S_{\gamma} \right\} \\
\Lambda_{11} = 
\left\{t_{3} \mid t_{2} < t_{3}, t_{3} \parallel t_{1}, \text{ and } \exists \gamma\in\Gamma_{t_{1},t_{2}}^{\ref{SPM}} \text{ s.t. } t_{3}\in S_{\gamma} \right\} \\
\Lambda_{12} = 
\left\{t_{3} \mid t_{2} < t_{3}, t_{3} \parallel t_{1}, \text{ and } \exists \gamma\in\Gamma_{t_{1},t_{2}}^{\ref{SBM}} \text{ s.t. } t_{3}\in S_{\gamma} \right\} \\
\Lambda_{13} = \left(\bigcup_{j=1}^{12}\Lambda_{j}\right)^{c}\setminus\left\{t_{1}, t_{2}\right\} 
\end{gathered}
\end{equation}
These partition the set of sublattices of $T$ containing $t_{1}$, $t_{2}$, and $t_{3}$ based on the ordering of the treatments and the assumptions imposed. These sets are mutually exclusive and exhaustive, so each $t_{3}$ will belong to exactly one. For an example of how the sets can be used, consider $\Lambda_{5}$. If $t_3\in \Lambda_5$, then $t_{1}$, $t_{2}$, and $t_{3}$ belong to one or more sublattices on which \ref{SPM} holds, but none on which \ref{SBM} holds. The required ordering implies that 
\[
\underline{K} - y\left(t_{3}\right) \leq y\left(t_{1}\right) - y\left(t_{2}\right)
\] 
and thus
\begin{equation*}
y\left(t_{1}\right) - y\left(t_{2}\right) \in \left[\underline{K} - y\left(t_{3}\right), \overline{K} - \underline{K} \right]
\end{equation*}
which implies that 
\[
\underline{K} - \E\left[\, y\left(t_{3}\right) \mid z = t_{3} \,\right] \\
\leq \E\left[\, y\left(t_{1}\right) - y\left(t_{2}\right) \mid z = t_{3}\,\right] \leq \\
\overline{K} - \underline{K}
\]

The following result uses the above sets to derive sharp bounds on $\E\left[\, y\left(t_{1}\right) - y\left(t_{2}\right) \,\right]$ under arbitrary combinations of supermodularity and submodularity.

\begin{proposition}\label{spm bounds 2}
Let $\left\{S_{\gamma}\right\}_{\gamma\in\Gamma}$ be the collection of all sublattices of $T$ such that, for every $\gamma\in\Gamma$, $S_{\gamma}$ is not a chain and $\left|S_{\gamma}\right| = 4$. Define $\Gamma^{\ref{SPM}}\subseteq\Gamma$ to be the set of $\gamma$ such that \ref{SPM} holds on $S_{\gamma}$ and \ref{SBM} does not hold on $S_{\gamma}$ iff $\gamma \in \Gamma^{\ref{SPM}}$; likewise, define $\Gamma^{\ref{SBM}}\subseteq\Gamma$ to be the set of $\gamma$ such that \ref{SBM} holds on $S_{\gamma}$ and \ref{SPM} does not hold on $S_{\gamma}$ iff $\gamma \in \Gamma^{\ref{SBM}}$. Define $\Gamma^{MOD}\subseteq\Gamma$ to be the set of $\gamma$ such that \ref{SPM} and \ref{SBM} hold on $S_{\gamma}$ iff $\gamma \in \Gamma^{MOD}$. Let $\Gamma_{t_{1},t_{2}}^{\ref{SPM}}\subseteq \Gamma^{\ref{SPM}}$ be the set of $\gamma$ such that $t_{1},t_{2}\in S_{\gamma}$ and $\gamma\in\Gamma^{\ref{SPM}}$; likewise for $\Gamma_{t_{1},t_{2}}^{\ref{SBM}}$ and $\Gamma_{t_{1},t_{2}}^{MOD}$. Then, for $t_{2} < t_{1}$,
\begin{equation}\label{sharp bounds 2}
\begin{gathered}
\left[\E\left[\, y\left(t_{1}\right) \mid z=t_{1} \,\right] - \overline{K}\right]P\left(z=t_{1}\right) + 
\left[\underline{K} - \E\left[\, y\left(t_{2}\right) \mid z=t_{2} \,\right]\right]P\left(z=t_{2}\right) \\
+ \sum_{t_{3}\in\bigcup_{j\in\left\{1,3,7,8\right\}}\Lambda_{j}}\left[\E\left[\, y\left(t_{3}\right) \mid z=t_{3} \,\right] - \overline{K}\right]P\left(z=t_{3}\right) \\
+ \sum_{t_{3}\in\bigcup_{j\in\left\{4,5,10,12\right\}}\Lambda_{j}}\left[\underline{K} - \E\left[\, y\left(t_{3}\right) \mid z=t_{3} \,\right]\right]P\left(z=t_{3}\right) \\
+ \sum_{t_{3}\in\bigcup_{j\in\left\{2,6,9,11,13\right\}}\Lambda_{j}} \left[\underline{K} - \overline{K}\right]P\left(z=t_{3}\right) \\
\leq \E\left[\, y\left(t_{1}\right) - y\left(t_{2}\right) \,\right] \leq \\
\left[\overline{K} - \E\left[\, y\left(t_{2}\right) \mid z=t_{2} \,\right]\right]P\left(z=t_{2}\right) 
+ \left[\E\left[\, y\left(t_{1}\right) \mid z=t_{1} \,\right] - \underline{K}\right]P\left(z=t_{1}\right) \\
+ \sum_{t_{3}\in\bigcup_{j\in\left\{1,2,7,9\right\}}\Lambda_{j}}\left[\E\left[\, y\left(t_{3}\right) \mid z=t_{3} \,\right] - \underline{K}\right]P\left(z=t_{3}\right) \\
+ \sum_{t_{3}\in\bigcup_{j\in\left\{4,6,10,11\right\}}\Lambda_{j}}\left[\overline{K} - \E\left[\, y\left(t_{3}\right) \mid z=t_{3} \,\right]\right]P\left(z=t_{3}\right) \\
+ \sum_{t_{3}\in\bigcup_{j\in\left\{3,5,8,12,13\right\}}\Lambda_{j}} \left[\overline{K} - \underline{K}\right]P\left(z=t_{3}\right) \\
\end{gathered}
\end{equation}
These bounds are sharp.
\end{proposition}

\begin{proof}[Proof of proposition \ref{spm bounds 2}]

By the Law of Iterated Expectations, 
\begin{equation}\label{lie}
\E\left[\, y\left(t_{1}\right) - y\left(t_{2}\right) \,\right] = \sum_{t_{3}\in T} \E\left[\, y\left(t_{1}\right) - y\left(t_{2}\right) \mid z = t_{3}\,\right] P\left(z=t_{3}\right)
\end{equation}

Sharp bounds for the partially identified expectations on the right hand side of \eqref{lie} will yield sharp bounds on $\E\left[\, y\left(t_{1}\right) - y\left(t_{2}\right) \,\right]$. I proceed by finding the sharp identification region for an arbitrary $y\left(t_{1}\right) - y\left(t_{2}\right)$ and every possible value that $z$ may take. These can be averaged to find sharp bounds on $\E\left[\, y\left(t_{1}\right) - y\left(t_{2}\right) \mid z \,\right]$ for all $z$. \ref{SPM} and \ref{SBM} provide identifying power by establishing monotonicity relationships between treatment effects, so for each possible value of $z$, it must be determined which of the following inequalities are implied by the maintained assumptions:
\begin{gather}
y\left(t_{1}\right) - y\left(t_{2}\right) \leq y\left(z\right) - \underline{K} \label{rightleft} \\
y\left(t_{1}\right) - y\left(t_{2}\right) \leq \overline{K} - y\left(z\right) \label{rightright}\\
y\left(z\right) - \overline{K} \leq y\left(t_{1}\right) - y\left(t_{2}\right) \label{leftleft}\\
\underline{K} - y\left(z\right) \leq y\left(t_{1}\right) - y\left(t_{2}\right) \label{leftright}
\end{gather} 

I show that \begin{inparaenum}[\itshape 1)] \item the union of these sets is $T\setminus\left\{t_{1}, t_{2}\right\}$, \item these sets are mutually exclusive, and \item they allow for a characterization of the identification region for $\E\left[\, y\left(t_{1}\right) - y\left(t_{2}\right) \,\right]$\end{inparaenum}. $\bigcup_{j=1}^{12}\Lambda_{j} = T\setminus\left\{t_{1}, t_{2}\right\}$ follows from the definition of $\Lambda_{13}$. Due to the orderings required for membership, the sets $\Lambda_{1}\cup\Lambda_{2}\cup\Lambda_{3}$, $\Lambda_{4}\cup\Lambda_{5}\cup\Lambda_{6}$, $\Lambda_{7}\cup\Lambda_{8}\cup\Lambda_{9}$, and $\Lambda_{10}\cup\Lambda_{11}\cup\Lambda_{12}$ are mutually exclusive. 

The sets $\Lambda_{1}$, $\Lambda_{2}$, and $\Lambda_{3}$ collect those sublattices where $t_{2}$ is the bottom and $t_{1}$ is in the interior. $\Lambda_{1}$, $\Lambda_{2}$, and $\Lambda_{3}$ are mutually exclusive because $\Lambda_{1}$ requires either the existence of a four--point sublattice containing $t_{1}$, $t_{2}$, and $t_{3}$ on which both \ref{SPM} and \ref{SBM} hold or the existence of at least two distinct sublattices containing $t_{1}$, $t_{2}$, and $t_{3}$ where \ref{SPM} holds on one while \ref{SBM} holds on the other. $\Lambda_2$ rules out both of these possibilities, only allowing for the existence of a sublattice containing $t_{1}$, $t_{2}$, and $t_{3}$ on which \ref{SPM} holds; similarly, $\Lambda_3$ only allows for the existence of a sublattice containing $t_{1}$, $t_{2}$, and $t_{3}$ on which \ref{SBM} holds. An identical argument establishes that $\Lambda_{4}$, $\Lambda_{5}$, and $\Lambda_{6}$ are mutually exclusive; these sets are defined identically to $\Lambda_{1}$, $\Lambda_{2}$, and $\Lambda_{3}$, respectively, with the modification that $t_{3} < t_{2} < t_{1}$ rather than $t_{2} < t_{1} < t_{3}$, so $t_{2}$ is in the interior and $t_{1}$ is the top of each sublattice. 

The set construction is complicated in the above cases because the orderings $t_{2} < t_{1} < t_{3}$ and $t_{3} < t_{2} < t_{1}$ are compatible with multiple four--point sublattices containing $t_{1}$, $t_{2}$, and $t_{3}$. This results from the fact that there may be multiple $t_{4}$ such that $t_{1}\vee t_{4} = t_{3}$ and $t_{1}\wedge t_{4} = t_{2}$ (in the former case) and $t_{2}\vee t_{4} = t_{1}$ and $t_{2}\wedge t_{4} = t_{3}$ (in the latter case). When exactly two of $t_{1}$, $t_{2}$, and $t_{3}$ are incomparable, there can be at most one four--point sublattice containing $t_{1}$, $t_{2}$, and $t_{3}$, since the incomparable treatments define a unique meet and join. This simplifies the construction of the sets $\Lambda_{7},\ldots,\Lambda_{12}$. By the above argument and the fact that
\[
\Gamma_{t_{1},t_{2}}^{MOD}\cap\Gamma_{t_{1},t_{2}}^{\ref{SPM}} = \Gamma_{t_{1},t_{2}}^{MOD}\cap\Gamma_{t_{1},t_{2}}^{\ref{SBM}} = \Gamma_{t_{1},t_{2}}^{\ref{SPM}}\cap\Gamma_{t_{1},t_{2}}^{\ref{SBM}} = \emptyset
\]
the sets $\Lambda_{7}$, $\Lambda_{8}$, and $\Lambda_{9}$ are mutually exclusive. The same argument establishes that $\Lambda_{10}$, $\Lambda_{11}$, and $\Lambda_{12}$ are mutually exclusive.

The sets $\Lambda_{1},\ldots,\Lambda_{12}$ define every sublattice membership pattern for $t_{1}$ and $t_{2}$ for which \ref{SPM} and \ref{SBM} may have any implications; this follows from proposition \ref{spm bounds 0} and its straightforward extension to the case of \ref{SBM}. I now outline the implications for the identification of $y\left(t_{1}\right) - y\left(t_{2}\right)$ when $z\in \Lambda_j$ for each $j\in\left\{1,\ldots,12\right\}$.

If $z\in \Lambda_1$, then both \ref{SPM} and \ref{SBM} hold on one or more sublattices containing $t_{1}$, $t_{2}$, and $z$, and the required ordering implies that both \eqref{rightleft} and \eqref{leftleft} hold. Thus, 
\begin{equation}\label{lambda1}
y\left(t_{1}\right) - y\left(t_{2}\right) \in \left[y\left(z\right) - \overline{K}, y\left(z\right) - \underline{K}\right]
\end{equation}
If $z\in \Lambda_2$, then $t_{1}$, $t_{2}$, and $z$ belong to one or more sublattices on which only \ref{SPM} holds. The required ordering implies that \eqref{leftright} holds and thus
\begin{equation}\label{lambda2}
y\left(t_{1}\right) - y\left(t_{2}\right) \in \left[\underline{K} - \overline{K}, y\left(z\right) - \underline{K}\right]
\end{equation}
If $z\in \Lambda_3$, then $t_{1}$, $t_{2}$, and $z$ belong to one or more sublattices on which only \ref{SBM} holds. The required ordering implies that \eqref{leftleft} holds and thus
\begin{equation}\label{lambda3}
y\left(t_{1}\right) - y\left(t_{2}\right) \in \left[y\left(z\right) - \overline{K}, \overline{K} - \underline{K}\right]
\end{equation}
If $z\in \Lambda_4$, then both \ref{SPM} and \ref{SBM} hold on one or more sublattices containing $t_{1}$, $t_{2}$, and $z$, and the required ordering implies that both \eqref{rightright} and \eqref{leftright} hold. Thus, 
\begin{equation}\label{lambda4}
y\left(t_{1}\right) - y\left(t_{2}\right) \in \left[\underline{K} - y\left(z\right), \overline{K} - y\left(z\right) \right]
\end{equation}
If $z\in \Lambda_5$, then $t_{1}$, $t_{2}$, and $z$ belong to one or more sublattices on which only \ref{SPM} holds. The required ordering implies that \eqref{leftright} holds and thus
\begin{equation}\label{lambda5}
y\left(t_{1}\right) - y\left(t_{2}\right) \in \left[\underline{K} - y\left(z\right), \overline{K} - \underline{K} \right]
\end{equation}
If $z\in \Lambda_6$, then $t_{1}$, $t_{2}$, and $z$ belong to one or more sublattices on which only \ref{SBM} holds. The required ordering implies that \eqref{rightright} holds and thus
\begin{equation}\label{lambda6}
y\left(t_{1}\right) - y\left(t_{2}\right) \in \left[\underline{K} - \overline{K}, \overline{K} - y\left(z\right) \right]
\end{equation}
If $z\in \Lambda_7$, then both \ref{SPM} and \ref{SBM} hold on the only sublattice containing $t_{1}$, $t_{2}$, and $z$, and the required ordering implies that both \eqref{rightleft} and \eqref{leftleft} hold. Thus, 
\begin{equation}\label{lambda7}
y\left(t_{1}\right) - y\left(t_{2}\right) \in \left[y\left(z\right) - \overline{K}, y\left(z\right) - \underline{K}\right]
\end{equation}
If $z\in \Lambda_8$, then \ref{SPM} holds on the only sublattice containing $t_{1}$, $t_{2}$, and $z$, and the required ordering implies that \eqref{leftleft} holds and thus
\begin{equation}\label{lambda8}
y\left(t_{1}\right) - y\left(t_{2}\right) \in \left[y\left(z\right) - \overline{K}, \overline{K} - \underline{K}\right]
\end{equation}
If $z\in \Lambda_9$, then \ref{SBM} holds on the only sublattice containing $t_{1}$, $t_{2}$, and $z$, and the required ordering implies that \eqref{rightleft} holds and thus
\begin{equation}\label{lambda9}
y\left(t_{1}\right) - y\left(t_{2}\right) \in \left[\underline{K} - \overline{K}, y\left(z\right) - \underline{K}\right]
\end{equation}
If $z\in \Lambda_{10}$, then both \ref{SPM} and \ref{SBM} hold on the only sublattice containing $t_{1}$, $t_{2}$, and $z$, and the required ordering implies that both \eqref{leftright} and \eqref{rightright} hold. Thus, 
\begin{equation}\label{lambda10}
y\left(t_{1}\right) - y\left(t_{2}\right) \in \left[\underline{K} - y\left(z\right), \overline{K} - y\left(z\right)\right]
\end{equation}
If $z\in \Lambda_{11}$, then \ref{SPM} holds on the only sublattice containing $t_{1}$, $t_{2}$, and $z$, and the required ordering implies that \eqref{rightright} holds and thus
\begin{equation}\label{lambda11}
y\left(t_{1}\right) - y\left(t_{2}\right) \in \left[\underline{K} - \overline{K}, \overline{K} - y\left(z\right) \right]
\end{equation}
If $z\in \Lambda_{12}$, then \ref{SBM} holds on the only sublattice containing $t_{1}$, $t_{2}$, and $z$, and the required ordering implies that \eqref{leftright} holds and thus
\begin{equation}\label{lambda12}
y\left(t_{1}\right) - y\left(t_{2}\right) \in \left[\underline{K} - y\left(z\right), \overline{K} - \underline{K} \right]
\end{equation}

The set $\Lambda_{13}$ contains those $t_{3}$ that obey one of the orderings from $\Lambda_{1},\ldots,\Lambda_{12}$ but do not belong to a sublattice on which \ref{SPM} or \ref{SBM} hold; thus, the no--assumption bounds $\left[\underline{K} - \overline{K}, \overline{K} - \underline{K}\right]$ will hold for these. It also contains $t_{3}$ such that $t_{2} < t_{3} < t_{1}$, and any four--point sublattice containing these treatments must have $t_{2}$ as the bottom and $t_{1}$ as the top; on these sublattices, \ref{SPM} and \ref{SBM} have no implications for identification. Additionally, $\Lambda_{13}$ contains $t_{3}$ such that $t_{3} \parallel t_{1}$ and $t_{3} \parallel t_{2}$; \ref{SPM} and \ref{SBM} have no identifying power in these circumstances, as $t_{1}$, $t_{2}$, and $t_{3}$ do not share a four--point sublattice.

The focus on four--point sublattices is without loss of generality, since the implications of assumptions \ref{SPM} and \ref{SBM} only appear on four--point sublattices. \ref{SPM} and \ref{SBM} have no implications on chains, so sublattices that are chains can be ignored. Restricting attention to elements of $\left\{S_{\gamma}\right\}_{\gamma\in\Gamma_{t_{1},t_{2}}}\subseteq\left\{S_{\gamma}\right\}_{\gamma\in\Gamma}$ is without loss of generality as well. This follows from the fact that \ref{SPM} and \ref{SBM} have no implications for potential outcomes under the maintained assumptions, and any implications for the treatment effect $y\left(t_{1}\right) - y\left(t_{2}\right)$ from another treatment effect which are mediated by a third treatment effect are realized directly on a sublattice containing the treatments from the first two treatment effects. To see this concretely, suppose that $S_{\gamma} = \left\{t_{2}, t_{1}, t_{3}, t_{4}\right\}$ and $S_{\gamma^{\p}} = \left\{t_{3}, t_{4}, t_{5}, t_{6}\right\}$ where $t_{1}\parallel t_{3}$, $t_{4}\parallel t_{5}$, $t_{2} = t_{1}\wedge t_{3}$, $t_{4} = t_{1} \vee t_{3}$, $t_{3} = t_{4}\wedge t_{5}$, and $t_{6} = t_{4}\vee t_{5}$. Suppose that \ref{SPM} holds on both $S_{\gamma}$ and $S_{\gamma^{\p}}$. This implies
\begin{equation*}
\begin{gathered}
y\left(t_{1}\right) - y\left(t_{2}\right) \leq y\left(t_{4}\right) - y\left(t_{3}\right) \leq y\left(t_{6}\right) - y\left(t_{5}\right) \\
\Longrightarrow y\left(t_{1}\right) - y\left(t_{2}\right) \leq y\left(t_{6}\right) - y\left(t_{5}\right)
\end{gathered}
\end{equation*}
The fact that $\left\{t_{2},t_{1},t_{5},t_{6}\right\} \in \left\{S_{\gamma}\right\}_{\gamma\in\Gamma^{\ref{SPM}}_{t_{1},t_{2}}}$ follows from lemma \ref{lattice lemma} below and the definition of $\Gamma^{\ref{SPM}}_{t_{2},t_{1}}$:

\begin{lemma}\label{lattice lemma}
Assume that $t_{1}\parallel t_{3}$, $t_{4}\parallel t_{5}$, $t_{2} = t_{1}\wedge t_{3}$, $t_{4} = t_{1} \vee t_{3}$, $t_{3} = t_{4}\wedge t_{5}$, and $t_{6} = t_{4}\vee t_{5}$. Then, $t_{5}\wedge t_{1} = t_{2}$ and $t_{5}\vee t_{1} = t_{6}$.
\end{lemma}
\begin{proof}
See appendix.
\end{proof}

The argument in the \ref{SBM} case simply reverses the inequalities.

The results from \eqref{lambda1}--\eqref{lambda12} are summarized below:

\begin{equation}\label{spm bounds 2 proof 1}
y\left(t_{1}\right) - y\left(t_{2}\right) \in 
\begin{cases}
\left[\underline{K} - y\left(z\right), \overline{K} - y\left(z\right)\right] & \text{if } z \in \left\{t_{2}\right\}\cup \Lambda_4 \cup \Lambda_{10} \\
\left[y\left(z\right) - \overline{K}, y\left(z\right) - \underline{K}\right] & \text{if } z \in \left\{t_{1}\right\} \cup \Lambda_1 \cup \Lambda_{7} \\
\left[\underline{K} - \overline{K}, y\left(z\right) - \underline{K}\right] & \text{if } z \in \Lambda_2 \cup \Lambda_{9} \\
\left[\underline{K} - \overline{K}, \overline{K} - y\left(z\right)\right] & \text{if } z \in \Lambda_6 \cup \Lambda_{11} \\
\left[y\left(z\right) - \overline{K}, \overline{K} - \underline{K}\right] & \text{if } z \in \Lambda_3 \cup \Lambda_{8} \\
\left[\underline{K} - y\left(z\right), \overline{K} - \underline{K}\right] & \text{if } z \in \Lambda_5 \cup \Lambda_{12} \\
\left[\underline{K} - \overline{K}, \overline{K} - \underline{K}\right] & \text{if } z \in \Lambda_{13} \\
\end{cases}
\end{equation}

Since the sets $\left\{t_{1}\right\},\left\{t_{2}\right\},\Lambda_{1},\ldots,\Lambda_{13}$ are mutually exclusive and exhaustive, \eqref{spm bounds 2 proof 1} characterizes the sharp identification region for $y\left(t_{1}\right) - y\left(t_{2}\right)$ and each possible $z$. Averaging the bounds in \eqref{spm bounds 2 proof 1} across $i$ yields sharp bounds on $\E\left[\, y\left(t_{1}\right) - y\left(t_{2}\right) \,\right]$ via \eqref{lie}. These are given in \eqref{sharp bounds 2}.

\end{proof}

Proposition \ref{spm bounds 2} generalizes proposition \ref{spm bounds 0} by allowing for a much richer set of treatments. The treatment may have any finite number of dimensions, and each may be binary or multivalued. Some dimensions of the treatment may be complements while others are substitutes; the result allows for arbitrary combinations of \ref{SPM} and \ref{SBM} as appropriate. The complexity of the result is due to two factors. First, the treatment pair $t_{1}, t_{2}$ may belong to multiple sublattices. Second, the position of the treatment pair within a lattice, i.e., whether it includes the top and/or bottom of the sublattice, differs across sublattices. The position of the treatment pair within a sublattice combined with the assumptions that hold on the sublattice determine whether the upper and/or lower bound (or neither) are improved. While the result appears complicated, defining the $\Lambda$ sets in practice seems to be fairly straightforward.

I have focused on bounding expectations of treatment effects using only supermodularity and submodularity assumptions. However, in applications these will often be paired with other assumptions, such as monotonicity. The next result modifies proposition \ref{spm bounds 2} by adding Manski's assumption of semi--monotone treatment response:

\begin{proposition}\label{spm bounds 3}
Assume the conditions of proposition \ref{spm bounds 2} and that SMTR holds on $T$. Then, for $t_{2} < t_{1}$,
\begin{equation}\label{sharp bounds 3}
\begin{gathered}
0 \leq \E\left[\, y\left(t_{1}\right) - y\left(t_{2}\right) \,\right] \leq \\
\left[\overline{K} - \E\left[\, y\left(t_{2}\right) \mid z\leq t_{2} \,\right]\right]P\left(z \leq t_{2}\right) 
+ \left[\E\left[\, y\left(t_{1}\right) \mid z \geq t_{1} \,\right] - \underline{K}\right]P\left(z\geq t_{1}\right) \\
+ \sum_{t_{3}\in\bigcup_{j\in\left\{7,9\right\}}\Lambda_{j}}\left[\E\left[\, y\left(t_{3}\right) \mid z=t_{3} \,\right] - \underline{K}\right]P\left(z=t_{3}\right) \\
+ \sum_{t_{3}\in\cup\bigcup_{j\in\left\{10,11\right\}}\Lambda_{j}}\left[\overline{K} - \E\left[\, y\left(t_{3}\right) \mid z=t_{3} \,\right]\right]P\left(z=t_{3}\right) \\
+ \sum_{t_{3}\in\left\{t_{3} \mid t_{1}\nleq t_{3} \text{ and } t_{3}\nleq t_{2}\right\} \cap\bigcup_{j\in\left\{8,12,13\right\}}\Lambda_{j}} \left[\overline{K} - \underline{K}\right]P\left(z=t_{3}\right) \\
\end{gathered}
\end{equation}
These bounds are sharp.
\end{proposition}

Proposition \ref{spm bounds 3} shows that bounds computed under combinations of SMTR and \ref{SPM}/\ref{SBM} will generally be narrower than bounds computed under the assumptions separately. In particular, \ref{SPM} and \ref{SBM} may tighten the SMTR upper bound, and SMTR at least weakly improves the \ref{SPM}/\ref{SBM} lower bound and may tighten the upper bound as well. \eqref{spm bounds 3 proof 1} shows that when $z\in \Lambda_{7} \cup \Lambda_{9}$, \ref{SBM} will improve the SMTR upper bound, and when $z\in \Lambda_{10} \cup \Lambda_{11}$, \ref{SPM} will improve the SMTR upper bound. The SMTR lower bound at zero cannot be improved by using \ref{SPM} or \ref{SBM}. In the empirical application in section \ref{emp}, I show how the addition of \ref{SBM} improves the SMTR bounds with real data.  

\end{section}

\begin{section}{Instrumental Variables}\label{instrumental variables}

Traditional instrumental variable (IV) analysis of treatment response relies on the existence of a variable that is correlated with the treatment variable of interest but is mean--independent or independent of the distribution of response functions. Whether or not such independence assumptions are justified in a particular context is often the subject of vigorous debate. This has motivated researchers to find weaker and more credible forms of these assumptions that still retain some identification power. A leading example is the notion of a monotone instrumental variable (MIV): A variable $x$ is an MIV if average potential outcomes conditional on $x$ are monotone in $x$ \citep{manski2000monotone, manski2009more}. MIV and its generalizations impose restrictions on functionals of potential outcome distributions. Restrictions can also be imposed directly on functionals of treatment effect distributions:

\begin{assumptionSPMIV}[Supermodular instrumental variable]\label{SPMIV}
$x^k$ is a \textit{supermodular instrumental variable} for $\E\left[\, y\left(t_{1}\right) - y\left(t_{2}\right) \mid x^k,x^{-k} \,\right]$ with $t_{2}\leq t_{1}$ if
\begin{gather}\label{spmiv 0}
x_{1}^k \leq x^{k}_{2}  
\Longrightarrow \E\left[\, y\left(t_{1}\right) - y\left(t_{2}\right) \mid x_{1}^k,x^{-k} \,\right] \leq \E\left[\, y\left(t_{1}\right) - y\left(t_{2}\right) \mid x^{k}_{2},x^{-k} \,\right]
\end{gather}
for all $x^{-k}$.\footnote{The weak inequality in \eqref{spmiv 0} can be reversed, in which case $x^k$ would be a \textit{submodular instrumental variable} (SBMIV). If the inequality is replaced with equality, $x^k$ becomes a \textit{modular instrumental variable} (MODIV).}
\end{assumptionSPMIV}

\ref{SPMIV} is an alternative formulation of complementarity where average treatment effects vary monotonically with an observed covariate $x^k$.\footnote{The \ref{SPM}/\ref{SPMIV} distinction is analogous to the MTR/MIV distinction; see \cite{manski2009more}.} An advantage of these assumptions is that evidence for their validity may be provided by previous studies where strong identifying assumptions are credible due to controlled randomization or a natural experiment. This evidence can motivate the application of these assumptions in other contexts where similar identification strategies are not available. This contrasts with traditional IV assumptions, which tend to be highly context--specific.

The \cite{djebbari2008heterogeneous} study of the heterogeneous impacts of the PROGRESA conditional cash transfer program provides some examples of potential \ref{SPMIV}s. PROGRESA provided payments to households conditional on regular school attendance by the household's children as well as visits to health centers. \citeauthor{djebbari2008heterogeneous} find that the impact of this program on per capita consumption is substantially larger for poorer households and households in more ``marginal'' villages, i.e., villages with greater rates of illiteracy, more limited infrastructure, and a greater dependence on agricultural activities. Evaluations of cash transfer programs in other contexts could make use of this information by using household poverty or village marginality as \ref{SPMIV}s. 

Further examples are provided by the \cite{bitler2014can} study of the impact of the Connecticut Jobs First experiment. This program substantially lowered the marginal tax rate on earnings below the poverty line for families on relief, relative to the existing Aid to Families with Dependent Children (AFDC) program. In the Jobs First program, the entire benefit package is terminated once earnings rise above the poverty line; this is in contrast to the AFDC, where benefits decline linearly with earnings. Labor supply theory clearly suggests that the impact of this alternative budget scheme should boost earnings and employment much more for those who were previously out of work or whose earnings left them far below the poverty line. These hypotheses are strongly borne out by the data, suggesting that measures of pre--program earnings and employment could serve as \ref{SPMIV}s in studies of similar programs which are not implemented experimentally.

\cite{manski2000monotone} considered a specific case of an MIV where the average potential outcomes are monotone in the realized treatment; they referred to this as the monotone treatment selection assumption. A similar assumption can be developed in this context:

\begin{assumptionSPMTS}[Supermodular treatment selection]\label{SPMTS}
The \textit{supermodular treatment selection} assumption holds for $\E\left[\, y\left(t_{1}\right) - y\left(t_{2}\right) \,\right]$ with $t_{2}\leq t_{1}$ if
\begin{gather}\label{spmts 0}
\E\left[\, y\left(t_{1}\right) - y\left(t_{2}\right) \mid z = t_{1} \,\right] \geq \E\left[\, y\left(t_{1}\right) - y\left(t_{2}\right) \mid z = t_{2} \,\right]
\end{gather}
\end{assumptionSPMTS}

The bounds under this assumption will be analogous to those under \ref{SPMIV} derived below. Let $\underline{B}\left(t,x\right)$ and $\overline{B}\left(t,x\right)$ be defined as
\[
\underline{B}\left(t,x\right) = \E\left[\, y\left(t\right) \mid z=t, x \,\right] P\left(z=t \mid x\right) + \underline{K} P\left(z\neq t \mid x\right) \;\;\; \forall t\in T, x\in X
\]
and
\[
\overline{B}\left(t,x\right) = \E\left[\, y\left(t\right) \mid z=t, x \,\right] P\left(z=t \mid x \right) + \overline{K} P\left(z\neq t \mid x \right) \;\;\; \forall t\in T, x\in X
\]
The following bounds can be derived using \ref{SPMIV}:
\begin{proposition}\label{SPMIV 1}
Assume that $x^k$ is an \ref{SPMIV} for $\E\left[\, y\left(t_{1}\right) - y\left(t_{2}\right) \mid x^k,x^{-k} \,\right]$ with $t_{1}, t_{2}\in T$. Then, the bounds
\begin{equation}
\begin{gathered}\label{spmiv 1}
\sup_{x^{k}_{2} \leq x_{1}^k}\left\{\underline{B}\left(t_{1},x^{k}_{2}, x^{-k}\right) - \overline{B}\left(t_{2}, x^{k}_{2}, x^{-k}\right)\right\} \\
\leq \E\left[\, y\left(t_{1}\right) - y\left(t_{2}\right) \mid x_{1}^k,x^{-k} \,\right] \leq \\
\inf_{x_{1}^k \leq x^{k}_{2}}\left\{\overline{B}\left(t_{1}, x^{k}_{2}, x^{-k}\right) - \underline{B}\left(t_{2}, x^{k}_{2}, x^{-k}\right)\right\}
\end{gathered}
\end{equation}
are sharp.
\end{proposition}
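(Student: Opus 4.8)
The plan is to combine the pointwise no--assumption bounds on the conditional treatment effect $\E[y(t)-y(t^{\p})\mid x_k,x_{-k}]$ with the monotonicity supplied by \ref{SPMIV}, and then to establish sharpness by exhibiting a single monotone ``effect curve'' that is consistent with the data, satisfies the assumption, and attains the stated bound at every value of $x_k$. First, holding $x_{-k}$ fixed, I would record the relevant no--assumption bounds. For each individual, exactly one of $y^{i}(t)$, $y^{i}(t^{\p})$ is observed when $z^{i}\in\{t,t^{\p}\}$ and neither is observed otherwise; repeating the per--unit case analysis of proposition \ref{spm bounds 0} conditional on $x=(x_k,x_{-k})$ and averaging shows that the sharp no--assumption identification region for $\E[y(t)-y(t^{\p})\mid x_k,x_{-k}]$ is the interval $[\underline{\Delta}(x_k),\overline{\Delta}(x_k)]$, where $\underline{\Delta}(x_k)=\underline B(t,x_k,x_{-k})-\overline B(t^{\p},x_k,x_{-k})$ and $\overline{\Delta}(x_k)=\overline B(t,x_k,x_{-k})-\underline B(t^{\p},x_k,x_{-k})$. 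Convexity of this region, obtained by freely imputing the unobserved potential outcomes anywhere in $[\underline K,\overline K]$, guarantees that \emph{every} value in the interval is attainable by some distribution of response functions consistent with the observed data.

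Next I would transfer these bounds across values of $x_k$ using \eqref{spmiv 0}. For any $x_k^{\p}\le x_k$, monotonicity gives $\E[y(t)-y(t^{\p})\mid x_k,x_{-k}]\ge \E[y(t)-y(t^{\p})\mid x_k^{\p},x_{-k}]\ge \underline{\Delta}(x_k^{\p})$, since the true conditional effect at $x_k^{\p}$ always obeys its own no--assumption lower bound; taking the supremum over $x_k^{\p}\le x_k$ yields the lower bound in \eqref{spmiv 1}. The upper bound follows symmetrically from $x_k^{\p}\ge x_k$, the inequality $\E[y(t)-y(t^{\p})\mid x_k,x_{-k}]\le \overline{\Delta}(x_k^{\p})$, and an infimum. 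This establishes that the true conditional effect lies within the stated bounds.

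The substantive step is sharpness, which requires producing a data--consistent process that also satisfies \ref{SPMIV} and attains the bound. The key observation is that the proposed lower bound, viewed as a function of $x_k$, is exactly the running supremum $\ell(x_k)=\sup_{x_k^{\p}\le x_k}\underline{\Delta}(x_k^{\p})$, which is automatically nondecreasing. I would verify that $\ell$ is feasible pointwise, i.e.\ $\underline{\Delta}(x_k)\le \ell(x_k)\le\overline{\Delta}(x_k)$ for every $x_k$: the left inequality is immediate, and the right inequality follows from nonemptiness of the identified set, since $\ell(x_k)$ is the proposition's lower bound and hence no larger than the upper bound, which is at most $\overline{\Delta}(x_k)$. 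For each $x_k$ I would then select, using the convexity established above, a distribution of response functions consistent with the observed conditional law of $(y(z),z)$ whose conditional effect equals $\ell(x_k)$; assembling these across $x_k$ produces a single process that reproduces the data, satisfies \eqref{spmiv 0} because $\ell$ is monotone, and attains the lower bound at every point. The upper bound is handled by the mirror construction using the running infimum $u(x_k)=\inf_{x_k^{\p}\ge x_k}\overline{\Delta}(x_k^{\p})$.

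I expect the main obstacle to be precisely the feasibility check $\ell(x_k)\le\overline{\Delta}(x_k)$ in the sharpness argument: unlike the pointwise lower bound, the transferred value $\ell(x_k)$ aggregates information from other covariate cells, so one must invoke nonemptiness of the identified set (equivalently, that \ref{SPMIV} is not refuted by the data) to ensure the monotone effect curve never leaves the no--assumption band. All remaining steps are routine once the conditional no--assumption bounds and their convexity are in hand.
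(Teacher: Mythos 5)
Your proposal is correct, and its first half --- conditional no--assumption bounds on $\E\left[\, y\left(t\right) - y\left(t^{\p}\right) \mid x_k,x_{-k} \,\right]$, then transferring lower bounds up from cells $x_k^{\p} \leq x_k$ and upper bounds down from cells $x_k \leq x_k^{\p}$ via \eqref{spmiv 0} --- is exactly the argument in the paper's proof. Where you genuinely go beyond the paper is sharpness: the paper's proof ends with ``The result follows'' and never exhibits a data--consistent process attaining the bounds, whereas you construct one explicitly, setting the conditional effect curve equal to the running supremum $\ell\left(x_k\right) = \sup_{x_k^{\p} \leq x_k}\underline{\Delta}\left(x_k^{\p}\right)$, which is automatically nondecreasing (hence satisfies \ref{SPMIV}), lies in each cell's no--assumption band whenever the identified set is nonempty, and is attainable cell by cell because the events $\left(x_k, x_{-k}\right)$ are disjoint conditioning sets within which unobserved potential outcomes can be imputed freely and independently; the mirror construction with the running infimum handles the upper bound. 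That construction is the substantive content of a sharpness claim, and your handling of the feasibility check $\ell\left(x_k\right) \leq \overline{\Delta}\left(x_k\right)$ via nonemptiness is consistent with the paper's blanket caveat (its footnote noting that empty bounds falsify the assumption). One further point in your favor: the paper's final display transferring the upper bound contains a typo --- it repeats the lower--bound expression $\underline{B}\left(t,x_k^{\p},x_{-k}\right) - \overline{B}\left(t^{\p},x_k^{\p},x_{-k}\right)$ where it should read $\overline{B}\left(t,x_k^{\p},x_{-k}\right) - \underline{B}\left(t^{\p},x_k^{\p},x_{-k}\right)$ --- whereas your statement of that step is the correct one.
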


As is the case for bounds derived under IV or MIV assumptions, inference is complicated by the $\sup$ and $\inf$ operators in equation \eqref{spmiv 1} \citep{manski2009more}. Analog estimators of the bounds in \eqref{spmiv 1} are consistent but biased in finite samples; the estimated bounds will generally be too narrow. Fortunately, the methods developed by \cite{chernozhukov2013intersection} can be applied to find bias--corrected estimates and associated confidence intervals. \citeauthor{chernozhukov2013intersection} discuss in detail the special cases of estimating nonparametric bounds using instrumental variables and MIVs; the bounds in \eqref{spmiv 1} are essentially identical for the purposes of estimation, so their results can be applied directly to my estimation problem. The theoretical extension allowing for multiple \ref{SPMIV}s is straightforward, and presents no novel estimation challenges besides those associated with high--dimensional nonparametric conditioning. 

Returning to assumption \ref{SPMIV}: If the second inequality in \eqref{spmiv 0} is reversed, $x^k$ becomes a \textit{submodular instrumental variable}. If $x^k$ is a supermodular and submodular instrumental variable, i.e., average treatment effects are constant across different values of $x^k$, then $x^k$ is a \textit{modular instrumental variable}. While this may seem like a strong assumption, it is routinely employed in applied work that models treatment effects without allowing for interactions. 

\ref{SPMIV}s may also improve the bounds on functionals of potential outcome distributions, as the following corollary illustrates:
\begin{corollary}\label{SPMIV 2}
Assume that $x^k$ is an \ref{SPMIV} for $\E\left[\, y\left(t_{1}\right) - y\left(t_{2}\right) \mid x^k,x^{-k} \,\right]$ with $t_{1}, t_{2}\in T$. Then, the bounds
\begin{equation}
\begin{gathered}\label{spmiv po 1}
\max\left\{\underline{B}\left(t_{1}, x_{1}^k, x^{-k}\right), \sup_{x^{k}_{2} \leq x_{1}^k}\left\{\underline{B}\left(t_{1},x^{k}_{2}, x^{-k}\right) - \overline{B}\left(t_{2}, x^{k}_{2}, x^{-k}\right)\right\} + \underline{B}\left(t_{2},x_{1}^k, x^{-k}\right)\right\} \\
\leq \E\left[\, y\left(t_{1}\right) \mid x_{1}^k,x^{-k} \,\right] \leq \\
\min\left\{\overline{B}\left(t_{1}, x_{1}^k, x^{-k}\right), \inf_{x_{1}^k \leq x^{k}_{2}}\left\{\overline{B}\left(t_{1}, x^{k}_{2}, x^{-k}\right) - \underline{B}\left(t_{2}, x^{k}_{2}, x^{-k}\right)\right\} + \overline{B}\left(t_{2},x_{1}^k, x^{-k}\right)\right\}
\end{gathered}
\end{equation}
and
\begin{equation}
\begin{gathered}\label{spmiv po 2}
\max\left\{\underline{B}\left(t_{2}, x_{1}^k, x^{-k}\right),  \underline{B}\left(t_{1},x_{1}^k, x^{-k}\right) - \inf_{x_{1}^k \leq x^{k}_{2}}\left\{\overline{B}\left(t_{1}, x^{k}_{2}, x^{-k}\right) - \underline{B}\left(t_{2}, x^{k}_{2}, x^{-k}\right)\right\}\right\} \\
\leq \E\left[\, y\left(t_{2}\right) \mid x_{1}^k,x^{-k} \,\right] \leq \\
\min\left\{\overline{B}\left(t_{2}, x_{1}^k, x^{-k}\right), \overline{B}\left(t_{1},x_{1}^k, x^{-k}\right) - \sup_{x^{k}_{2} \leq x_{1}^k}\left\{\underline{B}\left(t_{1},x^{k}_{2}, x^{-k}\right) - \overline{B}\left(t_{2}, x^{k}_{2}, x^{-k}\right)\right\} \right\}
\end{gathered}
\end{equation}
are sharp.
\end{corollary}

The proof of this result is straightforward, requiring only verification that these potential outcome bounds do not imply that the bounds in \eqref{spmiv 1} can be tightened. As in the case of proposition \ref{SPMIV 1}, the \citeauthor{chernozhukov2013intersection} approach to inference can be applied here. It is natural to consider the application of this IV assumption in concert with the shape restrictions considered above. Proposition \ref{SPMIV 3} below shows how \ref{SPMIV} can be combined with \ref{SPM} and SMTR:

\begin{proposition}\label{SPMIV 3}
Assume the conditions of proposition \ref{spm bounds 2} and assume that $x^k$ is an \ref{SPMIV} for $\E\left[\, y\left(t_{1}\right) - y\left(t_{2}\right) \mid x^k,x^{-k} \,\right]$. Let $\overline{\Delta  B}\left(t_{1},t_{2},x^k,x^{-k}\right)$ and $\underline{\Delta B}\left(t_{1},t_{2},x^k,x^{-k}\right)$ denote the conditional upper and lower bounds, respectively, for $\E\left[\, y\left(t_{1}\right) - y\left(t_{2}\right) \mid x^k,x^{-k} \,\right]$ from \eqref{sharp bounds 2}. Then, the bounds
\begin{equation}
\begin{gathered}\label{spmiv 3}
\sup_{x^{k}_{2} \leq x_{1}^k}\left\{\underline{\Delta B}\left(t_{1},t_{2},x_{2}^k,x^{-k}\right)\right\}
\leq \E\left[\, y\left(t_{1}\right) - y\left(t_{2}\right) \mid x_{1}^k,x^{-k} \,\right] \leq 
\inf_{x_{1}^k \leq x^{k}_{2}}\left\{\overline{\Delta B}\left(t_{1},t_{2},x^{k}_{2},x^{-k}\right)\right\}
\end{gathered}
\end{equation}
are sharp. Under the additional assumption of SMTR, these bounds are sharp when $\overline{\Delta  B}\left(t_{1},t_{2},x^k,x^{-k}\right)$ and $\underline{\Delta B}\left(t_{1},t_{2},x^k,x^{-k}\right)$ denote the conditional upper and lower bounds, respectively, from  \eqref{sharp bounds 3}.
\end{proposition}

This result, combining SPMIV with SMTR and the new shape restrictions proposed above, will be applied in the empirical exercise below.

\end{section}

\begin{section}{Independence}\label{Independence}

Independence assumptions have been used to operationalize the belief that individuals' realized treatments are unrelated to any individual characteristics which may influence responses. This should be the case, for example, in a randomized controlled trial. I show how statistical independence can be combined with shape restrictions and instrumental variables assumptions to narrow the bounds on entire treatment effect distributions. 

The familiar assumption of statistical independence of treatments and response functions is defined in my notation as follows:

\begin{assumptionSI}[Statistical independence]\label{SI}
Potential outcomes are \textit{statistically independent} of realized treatments if 
\[
P\left(y\left(t\right) \mid z\right) = P\left(y\left(t\right)\right) \;\;\; \forall t\in T
\]
\end{assumptionSI}

Assumption \ref{SI} implies that the marginal distribution of $y\left(t\right)$, denoted $F_{t}$, is point identified for all $t\in T$ such that $P\left(z = t\right) > 0$. However, the distribution of $y\left(t_{1}\right) - y\left(t_{2}\right)$, whose cumulative distribution function is denoted by $F_{t_{1},t_{2}}$, is only partially identified. \cite{makarov1982estimates} was the first to derive pointwise sharp bounds on the distribution of the sum of two random variables with fixed marginal distributions. \cite{frank1987best} derived these bounds in a simpler manner and extended them to allow for other operations such as differences and products as well as more than two variables. However, as \cite{kreinovich2006computing} show, these bounds are not sharp in the case of more than two variables. The following result, taken from Theorem 2 of \cite{williamson1990probabilistic}, gives the sharp bounds on the distribution of $y\left(t_{1}\right) - y\left(t_{2}\right)$ for any $t_{1},t_{2}\in T$:
\begin{equation*}
\begin{gathered}\label{makarov bounds} 
\underline{F}_{t_{1}, t_{2}}\left(w\right) = \underset{u+v=w}{\sup}\left\{\max\left\{F_{t_{1}}\left(u\right) - F_{t_{2}}\left(-v\right), 0\right\}\right\} \\
\leq F_{t_{1},t_{2}}\left(w\right) \leq \\
1 + \underset{u+v=w}{\inf}\left\{\min\left\{F_{t_{1}}\left(u\right) - F_{t_{2}}\left(-v\right), 0\right\}\right\} = \overline{F}_{t_{1}, t_{2}}\left(w\right)
\end{gathered}
\end{equation*}
\cite{fan2010sharp} discuss consistent nonparametric estimation of these bounds. This result can be applied to derive sharp bounds on the quantile function of $y\left(t_{1}\right) - y\left(t_{2}\right)$, $F_{t_{1}, t_{2}}^{-1}\left(q\right)$, where $F_{t_{1}, t_{2}}^{-1}$ denotes the generalized inverse of the cdf $F_{t_{1}, t_{2}}$. Define the following functions:

\begin{equation}
\begin{gathered}\label{quantile function} 
\underline{F}_{t_{1}, t_{2}}^{-1}\left(q\right) = \begin{cases} \underset{u\in\left[q,1\right]}{\inf}\left[F_{t_1}^{-1}\left(u\right) - F_{t_2}^{-1}\left(u - q\right)\right] & \text{if } q \neq 0 \\ 
F_{t_1}^{-1}\left(0\right) - F_{t_2}^{-1}\left(1\right) & \text{if } q = 0 \end{cases} \\
\overline{F}_{t_{1}, t_{2}}^{-1}\left(q\right) = \begin{cases} \underset{u\in\left[0,q\right]}{\sup}\left[F_{t_1}^{-1}\left(u\right) - F_{t_2}^{-1}\left(1 + u - q\right)\right] & \text{if } q \neq 1 \\ 
F_{t_1}^{-1}\left(1\right) - F_{t_2}^{-1}\left(0\right) & \text{if } q = 1 \end{cases}
\end{gathered}
\end{equation}

Then, $\overline{F}_{t_{1}, t_{2}}^{-1}\left(q\right) \leq F_{t_{1}, t_{2}}^{-1}\left(q\right) \leq \underline{F}_{t_{1}, t_{2}}^{-1}\left(q\right)$. \ref{SI} can be combined with \ref{SPM} to refine these bounds, as the following result shows:
\begin{proposition}\label{te si}
Assume that \ref{SI} holds and that $T = \left\{t_{1}\wedge t_{2}, t_{1}, t_{2}, t_{1}\vee t_{2}\right\}$ with $t_{1}\wedge t_{2} < t_{1}, t_{2} < t_{1}\vee t_{2}$. Assume that \ref{SPM} holds on $T$. Then, the bounds
\begin{gather*}\label{si bounds 1}
\max\left\{\overline{F}_{t_{1}\vee t_{2}, t_{1}}^{-1}\left(q\right), \overline{F}_{t_{2}, t_{1}\wedge t_{2}}^{-1}\left(q\right)\right\} \leq F_{t_{1}\vee t_{2}, t_{1}}^{-1}\left(q\right) \leq \underline{F}_{t_{1}\vee t_{2}, t_{1}}^{-1}\left(q\right)
\end{gather*}
and
\begin{gather*}\label{si bounds 2}
\overline{F}_{t_{1}, t_{1}\wedge t_{2}}^{-1}\left(q\right)
\leq F_{t_{1}, t_{1}\wedge t_{2}}\left(w\right) \leq
\min\left\{\underline{F}_{t_{1}, t_{1}\wedge t_{2}}^{-1}\left(q\right), \underline{F}_{t_{1}\vee t_{2}, t_{2}}^{-1}\left(q\right)\right\}
\end{gather*}
are sharp.
\end{proposition}

Similar results can be derived for \ref{SBM}. These shape restrictions could be justified by theoretical arguments; alternatively, since average treatment effects are point--identified in this context, the supermodularity or submodularity of average effects could be used to provide some justification for stronger structural assumptions. Extending these results to general lattices is problematic due to the fact that sharp bounds on the distribution function of a sum of more than two variables are an open question. Nonetheless, it is straightforward to collect all possible stochastic dominance relations implied by the maintained assumptions, and bounds which contain the true value (but are not necessarily sharp) can be obtained in a manner similar to that in proposition \ref{te si}. Such bounds may be useful in policy evaluation.

A reformulation of the \ref{SPMIV} assumption can also be applied in this setting:\footnote{\ref{SPMIV} itself is unhelpful, since conditional average treatment effects are point--identified.}

\begin{assumptionQSPMIV}[Quantile supermodular instrumental variable]\label{QSPMIV}
$x^k$ is a \textit{quantile supermodular instrumental variable} for $y\left(t_{1}\right) - y\left(t_{2}\right)$ if
\begin{gather*}
x_{1}^k \leq x^{k}_{2}  
\Longrightarrow F_{t_{1},t_{2}}^{-1}\left(q \mid x_{1}^k, x^{-k}\right) \leq F_{t_{1},t_{2}}^{-1}\left(q \mid x^{k}_{2}, x^{-k}\right)
\end{gather*}
for all $x^{-k}$.
\end{assumptionQSPMIV}

\cite{giustinelli2011non} analyzes the returns to education in Italy using a similar restriction on the quantile function of potential outcomes. \cite{blundell2007changes} impose monotonicity in a covariate on the conditional cdf of a potential outcome. Their bounds are simplified by the fact that the distribution of potential outcomes is partially observed, while the distribution of treatment effects is never observed, necessitating the use of the \cite{williamson1990probabilistic} bounds. The following proposition computes the bounds derived under \ref{QSPMIV}:
\begin{proposition}\label{qspmiv}
Assume that \ref{SI} holds. Assume that $x^k$ is a \ref{QSPMIV} for $y\left(t_{1}\right) - y\left(t_{2}\right)$ with $t_{1}, t_{2}\in T$. Then, the bounds
\begin{equation*}
\begin{gathered}\label{qspmiv 1}
\sup_{x^{k}_{2}\leq x_{1}^k}\left\{\overline{F}_{t_{1}, t_{2}}^{-1}\left(q \mid x^{k}_{2}, x^{-k}\right) \right\} \leq F_{t_{1},t_{2}}^{-1}\left(q \mid x_{1}^k, x^{-k}\right) \leq \inf_{x_{1}^k\leq x^{k}_{2}}\left\{\underline{F}_{t_{1}, t_{2}}^{-1}\left(q \mid x^{k}_{2}, x^{-k}\right)\right\}
\end{gathered}
\end{equation*}
are sharp.
\end{proposition}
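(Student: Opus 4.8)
The plan is to condition on $x$ throughout and reduce the problem, cell by cell, to the two--variable \citeauthor{williamson1990probabilistic} bounds of \eqref{makarov bounds}, and then to propagate those bounds across values of $x_k$ using the first--order stochastic dominance ordering that \ref{QSPMIV} imposes. First I would observe that \ref{SI}, applied within each covariate cell, point--identifies the conditional marginals $F_t\left(\cdot\mid x\right)$ and $F_{t^{\p}}\left(\cdot\mid x\right)$ for every $x$ with $P\left(z=t\mid x\right)>0$ and $P\left(z=t^{\p}\mid x\right)>0$. Consequently, for each fixed $x=\left(x_k,x_{-k}\right)$ the conditional treatment--effect distribution $F_{t,t^{\p}}\left(\cdot\mid x\right)$ is constrained only by its identified conditional marginals, so the conditional analog of \eqref{makarov bounds} delivers the pointwise sharp cell--level bounds
\[
\underline{F}_{t,t^{\p}}\left(w\mid x\right)\leq F_{t,t^{\p}}\left(w\mid x\right)\leq \overline{F}_{t,t^{\p}}\left(w\mid x\right).
\]

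Next I would establish validity by combining these cell--level bounds with \ref{QSPMIV}, which says precisely that $y\left(t\right)-y\left(t^{\p}\right)$ is stochastically increasing in $x_k$. For the lower bound, take any $x_k^{\p}\geq x_k$; then $F_{t,t^{\p}}\left(w\mid x_k,x_{-k}\right)\geq F_{t,t^{\p}}\left(w\mid x_k^{\p},x_{-k}\right)\geq \underline{F}_{t,t^{\p}}\left(w\mid x_k^{\p},x_{-k}\right)$, and taking the supremum over such $x_k^{\p}$ yields the claimed lower bound. Symmetrically, for $x_k^{\p}\leq x_k$ the dominance gives $F_{t,t^{\p}}\left(w\mid x_k,x_{-k}\right)\leq F_{t,t^{\p}}\left(w\mid x_k^{\p},x_{-k}\right)\leq \overline{F}_{t,t^{\p}}\left(w\mid x_k^{\p},x_{-k}\right)$, and the infimum over such $x_k^{\p}$ gives the upper bound. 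This is the direct half of the argument and parallels the validity step for \ref{SPMIV} in proposition \ref{SPMIV 1}.

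The substantive part is sharpness, which I would prove pointwise in $w$. Fix $w$ and a target cell $\left(x_k,x_{-k}\right)$, and let $\ell$ denote the claimed lower bound there. The goal is to exhibit a joint law of potential outcomes consistent with the observed data, with \ref{SI}, and with \ref{QSPMIV}, for which $F_{t,t^{\p}}\left(w\mid x_k,x_{-k}\right)=\ell$; a symmetric construction handles the upper bound. The idea is to assign to each value $x_k^{\p}$ (holding $x_{-k}$ fixed) a conditional coupling of $\left(y\left(t\right),y\left(t^{\p}\right)\right)$ whose conditional marginals are the identified $F_t\left(\cdot\mid x_k^{\p},x_{-k}\right),F_{t^{\p}}\left(\cdot\mid x_k^{\p},x_{-k}\right)$, such that (i) the member at the target cell attains the value $\ell$ at $w$, inherited through the stochastic ordering from a cell $x_k^{\p}\geq x_k$ at which the supremum defining $\ell$ is (approximately) attained and where the pointwise--sharp \citeauthor{williamson1990probabilistic} coupling is used, and (ii) the whole family is first--order stochastically monotone in $x_k$, so that \ref{QSPMIV} holds. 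The feasibility of these pointwise assignments in each cell follows from the non--emptiness of the cell--level intervals, which is exactly the hypothesis that the bounds are not falsified.

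I expect the main obstacle to be step (ii): reconciling the pointwise (in $w$) extremal couplings delivered by the Makarov bounds with the uniform (all--$w$) stochastic dominance ordering that \ref{QSPMIV} demands. The coupling that attains the bound at a single threshold $w$ need not produce conditional distributions that are globally FOSD--ordered across $x_k$, so the construction must interpolate --- for instance, by building the family from a common latent rank variable --- so as to preserve both the prescribed conditional marginals in every cell and the stochastic ordering across cells while still hitting the extremal value at the target $\left(x_k,x_{-k},w\right)$. This is the pointwise--versus--uniform sharpness distinction flagged via \cite{firpo2010bounds}; showing that a monotone family attaining the bound always exists under non--falsification is the crux of the proof.
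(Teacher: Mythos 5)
The paper's own ``proof'' of this proposition is the single word ``Trivial,'' so the only portion of your proposal that has any counterpart in the paper is the validity step, and that step you carry out correctly: under \ref{SI} (read, as the proposition implicitly requires, as independence conditional on $x$, so that the conditional marginals $F_t\left(\cdot \mid x\right)$ and $F_{t^{\p}}\left(\cdot \mid x\right)$ are point identified cell by cell), the conditional analog of \eqref{makarov bounds} bounds $F_{t,t^{\p}}\left(\cdot \mid x\right)$ within each cell, and the first--order stochastic dominance ordering imposed by \ref{QSPMIV} propagates those bounds across values of $x_k$, exactly parallel to the validity step of proposition \ref{SPMIV 1}.

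The genuine gap is sharpness, and you say so yourself: you describe what a proof must produce --- an FOSD--monotone family of cell--level couplings, each with the identified conditional marginals, attaining the extremum at the target $\left(x_k, x_{-k}, w\right)$ --- but you stop at naming the obstruction rather than resolving it. Your diagnosis of the obstruction is exactly right: \ref{QSPMIV} is a uniform--in--$w$ constraint linking cells, Makarov attainment within a cell is only pointwise in $w$, and the set of attainable difference CDFs in a cell is a \emph{strict} subset of the band $\left[\underline{F}_{t,t^{\p}}\left(\cdot \mid x\right), \overline{F}_{t,t^{\p}}\left(\cdot \mid x\right)\right]$ (under \ref{SI} the mean $\E\left[\, y\left(t\right) - y\left(t^{\p}\right) \mid x \,\right]$ is point identified, so with binary outcomes, for instance, the attainable CDFs form a one--parameter segment inside that band). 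But your proposed fix --- building the family from a common latent rank variable --- only delivers the ordering across cells; it does nothing to guarantee that each member of the family is attainable by some coupling of the identified marginals within its cell, so it restates the problem rather than solving it. As written, the proposal therefore proves validity only; to be fair, that is a gap relative to the statement, not relative to the paper, which offers no sharpness argument whatsoever. For what it is worth, the claim can be completed in the leading binary--outcome case: parameterize each cell's coupling by $\pi_x = P\left(y\left(t\right)=1, y\left(t^{\p}\right)=1 \mid x\right)$, observe that \ref{QSPMIV} becomes interval constraints on the increments of $\pi_x$ across adjacent values of $x_k$, and project this difference--constraint system onto the target cell (Fourier--Motzkin, equivalently shortest paths); given non--falsification, the extreme feasible value of $F_{t,t^{\p}}\left(w \mid x_k, x_{-k}\right)$ is exactly $\sup_{x_k \leq x_k^{\p}}\left\{\underline{F}_{t,t^{\p}}\left(w \mid x_k^{\p}, x_{-k}\right)\right\}$. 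An argument of that type, extended to general outcome distributions, is what ``sharp'' actually requires, and it remains unproved in your write--up.
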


Again, since conditional average treatment effects are point--identified, they can provide some evidence to support the validity of the stronger \ref{QSPMIV} assumption.

\end{section}

\begin{section}{Empirical Illustration}\label{emp}

To illustrate the use of the identification results developed in this paper, I reanalyze data from  \cite{shertzer2016zoning}. That study examines the extent to which Chicago's first zoning ordinance, passed in 1923, influenced the evolution of the spatial distribution of commercial, industrial, and residential activity in the city. Evidence of substantial treatment effect heterogeneity was found, motivating the use of \ref{SBM} and \ref{SPMIV} assumptions in the analysis below.

Chicago's 1923 zoning ordinance regulated land by restricting uses and density.\footnote{For details on the ordinance, consult \cite{shertzer2016zoning}.} Here, I bound the effects of 1923 commercial zoning on the probability that a city block will contain any commercial activity in 2005. Zoning for each type of use (commercial, industrial, single/multi--family residential) was coupled with an allowed maximum density level. Focusing on four different allowed density levels and an indicator for whether or not a block received commercial zoning yields eight distinct treatments and four distinct commercial zoning treatment effects.

Formally, the outcome variable $y\left(\cdot\right)$ is an indicator equal to 1 iff city block $i$ contains any commercial activity in 2005. $y$ is a function of a treatment $t \in T =  \left\{0,1\right\}\times\left\{1,2,3,4\right\}$. The first dimension of $t$ is equal to 1 if the block received any commercial zoning in 1923 and 0 otherwise. The second dimension of $t$ is equal to 1 if the block was zoned for the lowest density development (3 or fewer stories), 2 if it was zoned for higher densities suitable for apartment buildings (8--10 stories), 3 if it was zoned for high--rise commercial buildings, and 4 if it was zoned for the tallest commercial skyscrapers (this latter zoning was reserved for the central business district and immediately surrounding area).  

Since numerous other factors, such as pre--zoning land use, property values, and demographics, shaped both the initial zoning ordinance as well as the future development of the city, the exogenous treatment selection (ETS) assumption is likely too strong. As an alternative, I employ mixtures of SBM, SMTR, and SPMIV in the empirical analysis. \ref{SBM} implies that the commercial zoning treatment effect will be larger when paired with lower density zoning. This is motivated by the fact that areas zoned for lower densities will be more residential in character and contain a larger proportion of single--family homes \citep{shertzer2016zoning}. It is well documented that residential property owners (especially single--family homeowners) generally oppose the encroachment of commercial uses and have substantial power to block such development \citep{fischel2001homevoter}. It is likely that the early establishment of commercial activity through zoning will be a more important determinant of future commercial land use in areas also zoned for lower densities.\footnote{\cite{shertzer2016zoning} provide direct evidence of the veracity of this assumption.} This assumption is also consistent with previous literature showing that mixed use areas are more likely to see conversion to completely non--residential use than strictly residential use \citep{mcmillen1991markov}. 

While the effect of commercial zoning is likely to be larger in low--density areas, it is generally the case that commercial uses appear more often in denser areas. Commercial uses can afford the higher rents that prevail in dense areas, benefit more from agglomeration economies, and are compatible with residential development through mixed--use structures. This, combined with the fact that commercial zoning is unlikely to lower the likelihood of commercial development, suggests that SMTR is an appropriate assumption in this context. Table \ref{tab1} shows how the upper and lower bounds on the average treatment effect of commercial zoning, $\E\left[\, y\left(1,d\right) - y\left(0,d\right) \,\right]$, vary under different shape restrictions; bounds on the ATE are obtained for each density level $d\in\left\{1,2,3,4\right\}$. Appendix figure \ref{fig1} depicts the results from table \ref{tab1} graphically.

\begin{table}[ht]
\begin{center}
\caption{Bounds on the Effects of Commercial Zoning on Future Commercial Use}
\small
\begin{tabular}{lccccccccc} 
\toprule
 & \multicolumn{9}{c}{\centering Assumptions} \\
\cmidrule{2-10}
\multicolumn{1}{l}{\centering Treatment effect} & \multicolumn{1}{c}{\centering ETS} & \multicolumn{2}{c}{\centering None} & \multicolumn{2}{c}{\centering SBM} & \multicolumn{2}{c}{\centering SMTR} & \multicolumn{2}{c}{\centering \ref{SBM}/SMTR} \\
\cmidrule{2-4} \cmidrule{5-6} \cmidrule{7-8} \cmidrule{9-10} 
& &  \multicolumn{1}{c}{\centering Lower} & \multicolumn{1}{c}{\centering Upper} & \multicolumn{1}{c}{\centering Lower} & \multicolumn{1}{c}{\centering Upper} & \multicolumn{1}{c}{\centering Lower} & \multicolumn{1}{c}{\centering Upper} & \multicolumn{1}{c}{\centering Lower} & \multicolumn{1}{c}{\centering Upper} \\
\midrule
\multirow{2}{*}{$\E\left[\, y\left(1,1\right) - y\left(0,1\right) \,\right]$}  & 0.579  &    -0.702 &     0.932 &    -0.271 &     0.932 &     0 &     0.754 &     0 &     0.754 \\  
& (0.557, 0.601) &    -0.710 &     0.936 &    -0.277 &     0.936 &     0 &     0.761 &     0 &     0.761 \\
\multirow{2}{*}{$\E\left[\, y\left(1,2\right) - y\left(0,2\right) \,\right]$} &  0.503  &    -0.664 &     0.858 &    -0.569 &     0.790 &     0 &     0.803 &     0 &     0.738 \\  
& (0.483, 0.524) &    -0.672 &     0.863 &    -0.578 &     0.796 &     0 &     0.809 &     0 &     0.745 \\ 
\multirow{2}{*}{$\E\left[\, y\left(1,3\right) - y\left(0,3\right) \,\right]$} & 0.311  &    -0.936 &     0.957 &    -0.905 &     0.747 &     0 &     0.923 &     0 &     0.732 \\  
& (0.269, 0.354) &    -0.941 &     0.961 &    -0.910 &     0.754 &     0 &     0.927 &     0 &     0.739 \\  
\multirow{2}{*}{$\E\left[\, y\left(1,4\right) - y\left(0,4\right) \,\right]$} &  0.341 &    -0.969 &     0.982 &    -0.969 &     0.729 &     0 &     0.957 &     0 &     0.729 \\  
& (0.266, 0.417) &    -0.972 &     0.984 &    -0.972 &     0.736 &     0 &     0.960 &     0 &     0.736 \\ 
 \midrule
Observations & 14,690 & 14,690 & 14,690 & 14,690 & 14,690 & 14,690 & 14,690 & 14,690 & 14,690 \\
\bottomrule
\end{tabular}
\caption*{\small Bounds on the effect of 1923 commercial zoning on the probability of commercial use in 2005 under different shape restrictions. Commercial zoning effect is estimated separately for all four levels of 1923 density zoning. Upper and lower bounds are accompanied below by bootstrapped upper and lower 95\% confidence intervals. Column 1 presents point estimates assuming exogenous treatment selection (ETS), along with 95\% confidence intervals. Columns 2--3 present the no assumption bounds. Columns 4--5 add the assumption of submodularity (SBM). Columns 6--7 add the assumption of semi--monotone treatment response (SMTR). Columns 8--9 combine \ref{SBM} and SMTR.}
\label{tab1}
\end{center}
\end{table}

The results in table \ref{tab1} show how the imposition of the shape restrictions \ref{SBM} and SMTR can separately and jointly improve upon the no assumption upper and lower bounds on treatment effects. The no assumption bounds are very wide due to the large number of treatments. The imposition of \ref{SBM} establishes monotonicity of the upper and lower bounds as expected; upper bounds on the commercial zoning effects are smaller in higher density areas, while lower bounds are higher in low density areas. SMTR restricts the lower bound to be zero but its effect on upper bounds is theoretically ambiguous. Upper bounds on the potential outcomes treated with commercial zoning are monotone increasing with density:
\[
\E\left[\, y\left(1,1\right) \,\right]\leq \E\left[\, y\left(1,2\right) \,\right] \leq \E\left[\, y\left(1,3\right) \,\right] \leq \E\left[\, y\left(1,4\right) \,\right]
\]
However, the same is true of lower bounds on potential outcomes of observations not treated with commercial zoning:
\[
\E\left[\, y\left(0,1\right) \,\right]\leq \E\left[\, y\left(0,2\right) \,\right] \leq \E\left[\, y\left(0,3\right) \,\right] \leq \E\left[\, y\left(0,4\right) \,\right]
\]
In this case, the SMTR upper bounds on treatment effects are increasing with density; however, with the addition of \ref{SBM}, this is reversed, since \ref{SBM} implies that the commercial zoning effect will be larger in areas also zoned for low densities. Under both SMTR and \ref{SBM}, the results indicate that commercial zoning in 1923 increases the probability of commercial activity in 2005 by at most roughly 0.75 on the entire sample. While this still leaves a sizable range of possible values, it shows that historical zoning is not necessarily decisive, a fact which is not discernible from the uninformative no assumption bounds or the SMTR bounds in high density areas. 

\begin{table}[ht]
\caption{Bounds on the Effect of Commercial Zoning on Future Commercial Use: Subsample Not Treated with Commercial Zoning}
\begin{center}
\small
\begin{tabular}{lcccccc} 
\toprule
 & \multicolumn{6}{c}{\centering Assumptions} \\
\cmidrule{2-7}
\multicolumn{1}{l}{\centering Treatment effect} & \multicolumn{2}{c}{\centering None} & \multicolumn{2}{c}{\centering SBM} & \multicolumn{2}{c}{\centering \ref{SBM}/SPMIV} \\
\cmidrule{2-3} \cmidrule{4-5} \cmidrule{6-7}
&  \multicolumn{1}{c}{\centering Lower} & \multicolumn{1}{c}{\centering Upper} & \multicolumn{1}{c}{\centering Lower} & \multicolumn{1}{c}{\centering Upper} & \multicolumn{1}{c}{\centering Lower} & \multicolumn{1}{c}{\centering Upper} \\
\midrule
\multirow{2}{*}{$\E\left[\, y\left(1,1\right) - y\left(0,1\right) \,\right]$}  & -0.509  & 0.991 & -0.068  & 0.991 & -0.01 &  0.992 \\  
&  -0.521 & 0.994 & -0.075  & 0.994 & -0.0213  &  0.994 \\  
 \midrule
Observations & 6,072 & 6,072 & 6,072 & 6,072 & 6,027 & 6,027 \\
\midrule
Sample & \multicolumn{2}{c}{\centering No comm. zoning} & \multicolumn{2}{c}{\centering No comm. zoning} & \multicolumn{2}{c}{\centering No comm. zoning} \\
\bottomrule
\end{tabular}
\caption*{\small Bounds on the effect of 1923 commercial zoning on the probability of commercial use in 2005 when paired with the most restrictive 1923 density zoning. Sample is restricted to blocks not treated with commercial zoning. No assumption and \ref{SBM} upper and lower bounds are reported with bootstrapped upper and lower 95\% confidence intervals. The average treatment effect in the final two columns is conditional on 40\% of the population being white (3rd gen. and above) with the SPMIV assumption that conditional average treatment effects are monotone increasing in the percentage of the population that are non--3rd gen. white. \ref{SPMIV} bounds are half--median unbiased estimates with associated 95\% confidence intervals calculated using the \cite{chernozhukov2015implementing} Stata implementation of the inference methods of \cite{chernozhukov2013intersection}.}
\label{tab2}
\end{center}
\end{table}

While \ref{SBM} cannot establish the nonnegativity of the treatment effects on the whole sample, an analysis of the impact of commercial and low density zoning on the subsample of blocks which were not treated with commercial zoning is more informative. Table \ref{tab2} shows that \ref{SBM} alone establishes a lower bound of -0.068 on $\E\left[\, y\left(1,1\right) - y\left(0,1\right) \,\right]$. This can be strengthened using the fraction of the block's population that are not at least 3rd generation native white as an \ref{SPMIV}. The argument here is that native whites were a politically powerful group relative to blacks and recent immigrants, especially those from Eastern Europe. These latter groups would have less power to alter zoning ex post, and so one would expect the 1923 zoning effect to be larger if more of the population belongs to these politically marginalized groups.\footnote{This is consistent with the historical record as well as the results of \cite{shertzer2016race}, which provide direct evidence of discrimination against blacks and recent immigrants in the drafting of the 1923 ordinance in Chicago. An alternative approach could involve including recent Irish immigrants with native whites, as the Irish became politically influential early in Chicago's history; experimenting with this alternate approach yielded similar results.} Computing the average treatment effect conditional on 40\% of the population being white using this \ref{SPMIV} yields a lower bound of -0.01, so these two assumptions alone nearly identify the sign of the treatment effect.

\begin{table}[ht]
\caption{Bounds on the Effect of Commercial Zoning on Future Commercial Use: Undeveloped Subsample}
\begin{center}
\small
\begin{tabular}{lccccc} 
\toprule
&  \multicolumn{5}{c}{\centering Assumptions} \\
\cmidrule{2-6}
\multicolumn{1}{l}{\centering Treatment effect} & \multicolumn{1}{c}{\centering ETS} & \multicolumn{1}{c}{\centering None} & \multicolumn{1}{c}{\centering SMTR} & \multicolumn{1}{c}{\centering \ref{SBM}/SMTR} & \multicolumn{1}{c}{\centering SBM/SMTR/\ref{SPMIV}} \\
\cmidrule{2-6}
&  & \multicolumn{1}{c}{\centering Upper} & \multicolumn{1}{c}{\centering Upper}  & \multicolumn{1}{c}{\centering Upper} & \multicolumn{1}{c}{\centering Upper} \\
\midrule
\multirow{2}{*}{$\E\left[\, y\left(1,1\right) - y\left(0,1\right) \,\right]$} & 0.51 & 0.846 & 0.806 & 0.806 &  0.661 \\ 
& (0.469, 0.550)  & 0.862 & 0.825 & 0.825 & 0.688 \\  
\multirow{2}{*}{$\E\left[\, y\left(1,2\right) - y\left(0,2\right) \,\right]$} & 0.425 & 0.921 & 0.918 &  0.769 & 0.628 \\ 
& (0.352, 0.496) & 0.931 & 0.928 & 0.786 &  0.656 \\  
 \midrule
Observations & 3,669 & 3,669 & 3,669 & 3,669 & 3,669 \\
\midrule
Sample & \multicolumn{1}{c}{\centering Undeveloped} & \multicolumn{1}{c}{\centering Undeveloped} & \multicolumn{1}{c}{\centering Undeveloped} & \multicolumn{1}{c}{\centering Undeveloped} & \multicolumn{1}{c}{\centering Undeveloped} \\
\bottomrule
\end{tabular}
\caption*{\small Bounds on the effect of 1923 commercial zoning on the probability of commercial use in 2005 when paired with the two most restrictive levels of 1923 density zoning. Estimation is on a subset of blocks that were largely undeveloped in 1923; since these areas exclusively received zoning for either density level 1 or 2, I only discuss those treatment effects. Average treatment effects are conditional on 40\% of the population being white (3rd gen. and above). The bounds are computed under no assumptions, \ref{SBM}/SMTR, and \ref{SBM}/SMTR/\ref{SPMIV}. No assumption and \ref{SBM}/SMTR upper bounds are reported with upper 95\% confidence intervals from a kernel--weighted local polynomial smoother. SPMIV bounds assume that conditional average treatment effects are monotone increasing in the percentage of the population that are non--3rd gen. white. \ref{SPMIV} bounds are half--median unbiased estimates with associated 95\% confidence intervals calculated using the \cite{chernozhukov2015implementing} Stata implementation of the inference methods of \cite{chernozhukov2013intersection}.}
\label{tab3}
\end{center}
\end{table}

Table \ref{tab3} presents a further analysis of the upper bounds of the commercial zoning effect. The sample is restricted to blocks that fall into the bottom quartile of pre--zoning development; development is measured using a continuous index which is a function of population density, building heights, commercial and manufacturing establishment density, and distance to the central business district and Lake Michigan. The blocks in this sample lie in the outlying areas of the city which were largely undeveloped in 1923, with many vacant lots, very low population density, and few business establishments.\footnote{These blocks had a median of two residents and zero commercial/industrial uses per acre.} This is the area where the effect of commercial zoning should be the largest, as these areas also received the most restrictive density zoning and the pattern of future development was essentially undetermined. 

The results in table \ref{tab3} show that commercial zoning in 1923 increases the probability of commercial activity in 2005 by at most 0.66 in areas zoned for the lowest density and 0.63 in areas zoned for higher densities. While still compatible with a range of magnitudes, these results show that historical zoning was limited in its ability to direct future development, leaving a large role for market--driven rezonings to alter the initial plan.

\end{section}

\begin{section}{Conclusion}

In this paper, I contribute to the literature on the partial identification of treatment effects by developing and applying assumptions that formalize the notion of complementarity. I examine the identification power of these assumptions and discuss how they can be justified. The supermodularity and submodularity assumptions proposed can be used to narrow bounds on treatment effects in studies of policy complementarity, which have traditionally been stymied by a lack of pseudo--experimental variation in multiple policies simultaneously. Proposition \ref{spm bounds 0} shows how these shape restrictions can improve bounds on average treatment effects in the simple case of two binary treatments. Proposition \ref{spm bounds 2} extends this result to a more general treatment set with an arbitrary finite number of (possibly multivalued) treatments and the possibility of complex combinations of supermodularity and submodularity. Proposition \ref{spm bounds 3} combines the supermodularity and submodularity assumptions with the semi--monotone treatment response assumption of \cite{manski1997monotone}.

Complementarity may also stem from differential treatment response among subpopulations defined by observed covariates. Subgroup heterogeneity in treatment effects is an increasingly widely recognized phenomenon, and can often be motivated directly from economic theory (see, e.g., \cite{bitler2014can}). Proposition \ref{SPMIV 1} and corollary \ref{SPMIV 2} show how qualitative information about treatment effect heterogeneity embodied in supermodular instrumental variables can be used to improve bounds on average treatment effects and average potential outcomes, respectively. Proposition \ref{SPMIV 3} combines supermodular IVs with the new shape restrictions as well as semi--monotone treatment response. Supermodular instrumental variables can be used in studies with one or many treatments, making them a versatile and potentially powerful addition to the arsenal of applied econometricians.

The assumptions developed here can be useful in the experimental context as well. Proposition \ref{te si} shows how supermodularity can be combined with an assumption of statistical independence between assigned treatments and responses to yield improved bounds on the cumulative distribution function of a treatment effect. These results can be applied to the evaluation of outcomes in complex (multi--treatment) randomized controlled trials, which are increasingly prevalent in many fields, including development economics. Since average treatment effects are point--identified in this context, one can determine if average responses exhibit supermodularity or submodularity. This can provide evidence that individual response functions are supermodular or submodular. Similarly, the behavior of (point--identified) conditional average treatment effects can motivate the use of a quantile supermodular instrumental variable; proposition \ref{qspmiv} shows how this assumption can strengthen the bounds on the quantile function of a treatment effect distribution. 

Bounds derived under the assumptions I propose here are of interest only to the extent that such assumptions are considered credible. Where might evidence for their validity come from? Arguments for policy complementarity may be provided by economic theory, as in \cite{lalive2006how}, or they may come from multi--treatment randomized controlled trials. Evidence on subgroup heterogeneity in treatment effects may be provided by previous studies where strong identifying assumptions are credible due to controlled randomization or a natural experiment. In such studies, conditional average treatment effects are point--identified, so the validity of the proposed assumptions can be established. This can motivate their use in other contexts where similar identification strategies are not available. This distinguishes supermodular IV assumptions from traditional IV assumptions, since the latter tend to be context--specific.

The empirical illustration in section \ref{emp} employs submodularity, semi--monotone treatment response, and supermodular IVs to study the impact of historical zoning on the evolution of land use in Chicago. In some cases, I am able to essentially rule out the possibility of a negative commercial zoning effect using only submodularity and a supermodular IV. I am also able to show that the effect of commercial zoning was far from decisive even in the areas of the city that were undeveloped at the time of zoning, demonstrating the potentially sizable role of market--driven zoning revisions.

\end{section}

\singlespacing

\bibliographystyle{econ}
\bibliography{biblio}

\newpage

\doublespacing

\begin{section}{Appendix: Figures}

\begin{figure}[ht]
  \centering
     \caption{Bounds on the Effects of Commercial Zoning on Future Commercial Use}
\subfloat[]{%
  \centering%
    \includegraphics[width=0.38\textwidth]{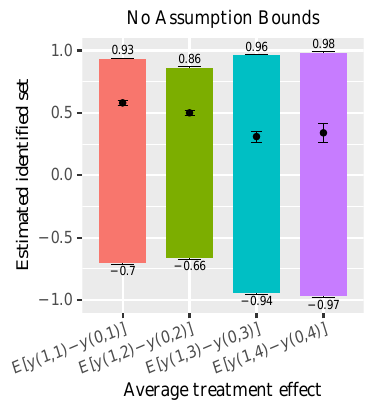}}
    \quad
    \subfloat[]{%
  \centering%
    \includegraphics[width=0.38\textwidth]{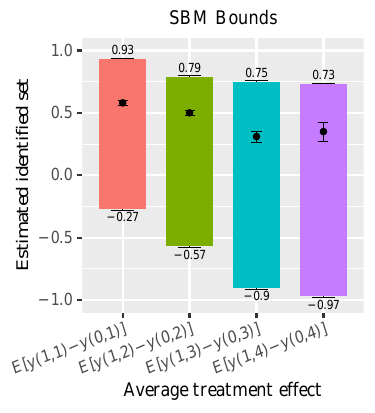}
    }
    
\subfloat[]{%
  \centering%
    \includegraphics[width=0.38\textwidth]{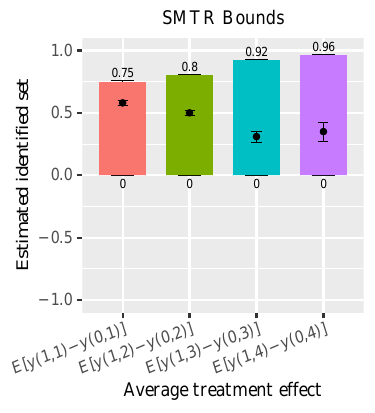}}
    \subfloat[]{%
  \centering%
  \includegraphics[width=0.38\textwidth]{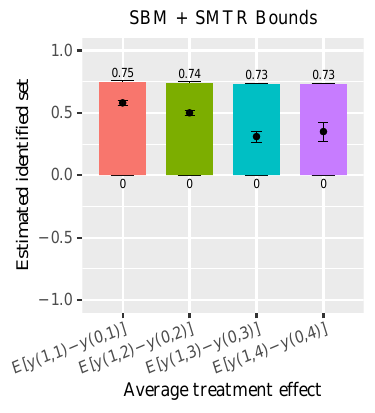} 
    }
    \label{fig1}
   \caption*{\small Bounds on the effect of 1923 commercial zoning on the probability of commercial use in 2005 under a variety of shape restrictions. The commercial zoning effect is estimated separately for all four levels of 1923 density zoning; colors denote the level of density zoning. Upper and lower bounds are accompanied by upper and lower 95\% confidence bounds, respectively. Confidence intervals estimated using a standard bootstrap. Point estimates from a naive OLS regression are included along with 95\% confidence intervals. Panel A shows the no assumption bounds. Panel B adds the assumption of submodularity (SBM). Panel C adds the assumption of semi--monotone treatment response (SMTR). Panel D combines \ref{SBM} and SMTR.}
\end{figure}

\end{section}

\newpage

\begin{section}{Appendix: Proofs}

\begin{proof}[Proof of lemma \ref{lattice lemma}]
I first show that $t_{1}\wedge t_{5} = t_{2}$:
\begin{align*}
t_{5}\wedge t_{4} & = t_{3} \\
\Longrightarrow t_{1}\wedge\left(t_{4}\wedge t_{5}\right) & = t_{1} \wedge t_{3} \\
\Longrightarrow \left(t_{1}\wedge t_{4}\right)\wedge t_{5} & = t_{2} \\
\Longrightarrow t_{1}\wedge t_{5} & = t_{2}
\end{align*}
where the last implication follows from the fact that $t_{1}\leq t_{4}$. Now, I show that $t_{1}\vee t_{5} = t_{6}$:
\begin{align*}
t_{4}\vee t_{5} & = t_{6} \\
\Longrightarrow \left(t_{1} \vee t_{3}\right)\vee t_{5} & = t_{6} \\
\Longrightarrow t_{1} \vee \left(t_{3}\vee t_{5}\right) & = t_{6} \\
\Longrightarrow t_{1} \vee t_{5} & = t_{6} 
\end{align*}
where the last implication follows from the fact that $t_{3}\leq t_{5}$.
\end{proof}

\begin{proof}[Proof of proposition \ref{spm bounds 3}]

This proof requires only a minor modification of the proof of proposition \ref{spm bounds 2}. Following \cite{manski1997monotone}, SMTR implies $y\left(t_{4}\right)\leq y\left(t_{3}\right)$ and thus $y\left(t_{3}\right) - y\left(t_{4}\right) > 0$ for all $t_{4} < t_{3}$, and this lower bound is sharp under SMTR alone. The lower bounds in \eqref{spm bounds 2 proof 1} take the form $y\left(z\right) - \overline{K}$, $\underline{K} - y\left(z\right)$, and $\underline{K} - \overline{K}$; these are necessarily weakly negative regardless of the value of $y\left(z\right)$, so they do not improve upon the SMTR lower bound. Thus, a sharp lower bound for the individual treatment effect of interest is
\[
y\left(t_{1}\right) - y\left(t_{2}\right) \geq 0
\]
Turning to the upper bounds, if $z \geq t_{1}$, then SMTR implies
\[
y\left(t_{1}\right) - y\left(t_{2}\right) \leq y\left(z\right) - \underline{K}
\]
This is the same upper bound implied by \ref{SPM} when $z\in \Lambda_{1} \cup \Lambda_{2}$, so SMTR has no additional identification power in this case. If $z\geq t_{1}$ and $z\in \Lambda_{3} \cup \Lambda_{13}$, SMTR generally improves upon the no--assumption upper bound $\overline{K} - \underline{K}$. If $z \leq t_{2}$, then SMTR implies
\[
y\left(t_{1}\right) - y\left(t_{2}\right) \leq \overline{K} - y\left(z\right)
\]
If $z\in \Lambda_{4} \cup \Lambda_{6}$, \ref{SBM} implies the same upper bound, so SMTR has no additional identifying power. If $z\leq t_{2}$ and $z\in \Lambda_{5} \cup \Lambda_{13}$, SMTR improves upon the no--assumption upper bound. If $z \in \Lambda_{10} \cup \Lambda_{11}$, then SMTR has no implications for the upper bound on $y\left(t_{1}\right) - y\left(t_{2}\right)$, but \ref{SPM} does via \ref{spm bounds 2 proof 1}. If $z \in \Lambda_{7} \cup \Lambda_{9}$, SMTR also has no implications for the upper bound on $y\left(t_{1}\right) - y\left(t_{2}\right)$, but \ref{SBM} weakly improves upon it, again via \ref{spm bounds 2 proof 1}. Combining these results yields
\begin{equation}\label{spm bounds 3 proof 1}
y\left(t_{1}\right) - y\left(t_{2}\right) \in 
\begin{cases}
\left[0, \overline{K} - y\left(z\right)\right] & \text{if } z \in \left\{t_{3} \mid t_{3}\leq t_{2}\right\} \cup \Lambda_{10} \cup \Lambda_{11} \\
\left[0, y\left(z\right) - \underline{K}\right] & \text{if } z \in \left\{t_{3} \mid t_{1}\leq t_{3}\right\} \cup \Lambda_{7} \cup \Lambda_{9} \\
\left[0, \overline{K} - \underline{K}\right] & \text{if } z \in \left\{t_{3} \mid t_{1}\nleq t_{3} \text{ and } t_{3}\nleq t_{2}\right\} \cap \left(\Lambda_{8} \cup \Lambda_{12} \cup \Lambda_{13}\right) \\
\end{cases}
\end{equation}

Averaging over $i$ yields the result in \eqref{sharp bounds 3}.

\end{proof}

\begin{proof}[Proof of proposition \ref{SPMIV 1}]
In the absence of other assumptions, the bounds
\[
\underline{B}\left(t_{1},x\right) \leq \E\left[\, y\left(t_{1}\right) \mid x \,\right] \leq \overline{B}\left(t_{1},x\right)
\]
and
\[
\underline{B}\left(t_{2},x\right) \leq \E\left[\, y\left(t_{2}\right) \mid x \,\right] \leq \overline{B}\left(t_{2},x\right)
\]
and thus
\[
\underline{B}\left(t_{1},x\right) - \overline{B}\left(t_{2},x\right) \leq \E\left[\, y\left(t_{1}\right) \mid x \,\right] - \E\left[\, y\left(t_{2}\right) \mid x \,\right] \leq \overline{B}\left(t_{1},x\right) - \underline{B}\left(t_{1},x\right)
\]
are sharp for all $x\in X$. The assumption that $x^k$ is an \ref{SPMIV} for $\E\left[\, y\left(t_{1}\right) - y\left(t_{2}\right) \mid x^k,x^{-k} \,\right]$ implies that
\[
\underline{B}\left(t_{1},x^{k}_{2}, x^{-k}\right) - \overline{B}\left(t_{2},x^{k}_{2}, x^{-k}\right) \leq \E\left[\, y\left(t_{1}\right) - y\left(t_{2}\right) \mid x_{1}^k,x^{-k} \,\right]
\]
for all $x^{k}_{2} \leq x_{1}^k$ and
\[
\E\left[\, y\left(t_{1}\right) - y\left(t_{2}\right) \mid x_{1}^k,x^{-k} \,\right] \leq \underline{B}\left(t_{1},x^{k}_{2}, x^{-k}\right) - \overline{B}\left(t_{2},x^{k}_{2}, x^{-k}\right) 
\]
for all $x_{1}^k \leq x^{k}_{2}$. The result follows.
\end{proof}

\begin{proof}[Proof of proposition \ref{SPMIV 2}]
Proposition \ref{SPMIV 1} implies that the bounds
\begin{gather*}
\sup_{x^{k}_{2} \leq x_{1}^k}\left\{\underline{B}\left(t_{1},x^{k}_{2}, x^{-k}\right) - \overline{B}\left(t_{2}, x^{k}_{2}, x^{-k}\right)\right\} \\
\leq \E\left[\, y\left(t_{1}\right) - y\left(t_{2}\right) \mid x_{1}^k,x^{-k} \,\right] \leq \\
\inf_{x_{1}^k \leq x^{k}_{2}}\left\{\overline{B}\left(t_{1}, x^{k}_{2}, x^{-k}\right) - \underline{B}\left(t_{2}, x^{k}_{2}, x^{-k}\right)\right\}
\end{gather*}
are sharp. Thus, $\E\left[\, y\left(t_{1}\right) \mid x \,\right]$ and $\E\left[\, y\left(t_{2}\right) \mid x \,\right]$ must simultaneously satisfy the no--assumption bounds
\begin{equation}\label{spmiv na 1}
\underline{B}\left(t_{1}, x\right) \leq \E\left[\, y\left(t_{1}\right) \mid x \,\right] \leq \overline{B}\left(t_{1}, x\right)
\end{equation}
and
\begin{equation}\label{spmiv na 2}
\underline{B}\left(t_{2}, x\right) \leq \E\left[\, y\left(t_{2}\right) \mid x \,\right] \leq \overline{B}\left(t_{2}, x\right)
\end{equation}
as well as
\begin{equation}
\begin{gathered}
\sup_{x^{k}_{2} \leq x_{1}^k}\left\{\underline{B}\left(t_{1},x^{k}_{2}, x^{-k}\right) - \overline{B}\left(t_{2}, x^{k}_{2}, x^{-k}\right)\right\} + \E\left[\, y\left(t_{2}\right) \mid x_{1}^k, x^{-k} \,\right] \\
\leq \E\left[\, y\left(t_{1}\right) \mid x_{1}^k, x^{-k} \,\right] \leq  \\ 
\inf_{x_{1}^k \leq x^{k}_{2}}\left\{\overline{B}\left(t_{1}, x^{k}_{2}, x^{-k}\right) - \underline{B}\left(t_{2}, x^{k}_{2}, x^{-k}\right)\right\} + \E\left[\, y\left(t_{2}\right) \mid x_{1}^k, x^{-k} \,\right]
\end{gathered}\label{spmiv2a}
\end{equation}
and
\begin{equation}
\begin{gathered}
\E\left[\, y\left(t_{1}\right) \mid x_{1}^k, x^{-k} \,\right] - \inf_{x_{1}^k \leq x^{k}_{2}}\left\{\overline{B}\left(t_{1}, x^{k}_{2}, x^{-k}\right) - \underline{B}\left(t_{2}, x^{k}_{2}, x^{-k}\right)\right\} \\
\leq \E\left[\, y\left(t_{2}\right) \mid x_{1}^k, x^{-k} \,\right] \leq  \\ 
\E\left[\, y\left(t_{1}\right) \mid x_{1}^k, x^{-k} \,\right] - \sup_{x^{k}_{2} \leq x_{1}^k}\left\{\underline{B}\left(t_{1},x^{k}_{2}, x^{-k}\right) - \overline{B}\left(t_{2}, x^{k}_{2}, x^{-k}\right)\right\}
\end{gathered}\label{spmiv2b}
\end{equation}
From \eqref{spmiv na 1}--\eqref{spmiv2b}, it is clear that
\begin{gather*}
\max\left\{\underline{B}\left(t_{1}, x_{1}^k, x^{-k}\right), \sup_{x^{k}_{2} \leq x_{1}^k}\left\{\underline{B}\left(t_{1},x^{k}_{2}, x^{-k}\right) - \overline{B}\left(t_{2}, x^{k}_{2}, x^{-k}\right)\right\} + \underline{B}\left(t_{2},x_{1}^k, x^{-k}\right)\right\} \\
\leq \E\left[\, y\left(t_{1}\right) \mid x_{1}^k,x^{-k} \,\right] \leq \\
\min\left\{\overline{B}\left(t_{1}, x_{1}^k, x^{-k}\right), \inf_{x_{1}^k \leq x^{k}_{2}}\left\{\overline{B}\left(t_{1}, x^{k}_{2}, x^{-k}\right) - \underline{B}\left(t_{2}, x^{k}_{2}, x^{-k}\right)\right\} + \overline{B}\left(t_{2},x_{1}^k, x^{-k}\right)\right\}
\end{gather*}
and
\begin{gather*}
\max\left\{\underline{B}\left(t_{2}, x_{1}^k, x^{-k}\right),  \underline{B}\left(t_{1},x_{1}^k, x^{-k}\right) - \inf_{x_{1}^k \leq x^{k}_{2}}\left\{\overline{B}\left(t_{1}, x^{k}_{2}, x^{-k}\right) - \underline{B}\left(t_{2}, x^{k}_{2}, x^{-k}\right)\right\}\right\} \\
\leq \E\left[\, y\left(t_{2}\right) \mid x_{1}^k,x^{-k} \,\right] \leq \\
\min\left\{\overline{B}\left(t_{2}, x_{1}^k, x^{-k}\right), \overline{B}\left(t_{1},x_{1}^k, x^{-k}\right) - \sup_{x^{k}_{2} \leq x_{1}^k}\left\{\underline{B}\left(t_{1},x^{k}_{2}, x^{-k}\right) - \overline{B}\left(t_{2}, x^{k}_{2}, x^{-k}\right)\right\} \right\}
\end{gather*}
must hold. I show that these bounds are feasible, i.e., consistent with \eqref{spmiv 1}, whence it follows that they are sharp. Consider the following events:
\begin{equation}
\begin{aligned}\label{L1}
& \max\left\{\underline{B}\left(t_{1}, x_{1}^k, x^{-k}\right), \sup_{x^{k}_{2} \leq x_{1}^k}\left\{\underline{B}\left(t_{1},x^{k}_{2}, x^{-k}\right) - \overline{B}\left(t_{2}, x^{k}_{2}, x^{-k}\right)\right\} + \underline{B}\left(t_{2},x_{1}^k, x^{-k}\right)\right\} \\ 
& = \sup_{x^{k}_{2} \leq x_{1}^k}\left\{\underline{B}\left(t_{1},x^{k}_{2}, x^{-k}\right) - \overline{B}\left(t_{2}, x^{k}_{2}, x^{-k}\right)\right\} + \underline{B}\left(t_{2},x_{1}^k, x^{-k}\right) > \underline{B}\left(t_{1}, x_{1}^k, x^{-k}\right)
\end{aligned}
\end{equation}
\begin{equation}
\begin{aligned}\label{U1}
& \min\left\{\overline{B}\left(t_{1}, x_{1}^k, x^{-k}\right), \inf_{x_{1}^k \leq x^{k}_{2}}\left\{\overline{B}\left(t_{1}, x^{k}_{2}, x^{-k}\right) - \underline{B}\left(t_{2}, x^{k}_{2}, x^{-k}\right)\right\} + \overline{B}\left(t_{2},x_{1}^k, x^{-k}\right)\right\} \\ 
& = \inf_{x_{1}^k \leq x^{k}_{2}}\left\{\overline{B}\left(t_{1}, x^{k}_{2}, x^{-k}\right) - \underline{B}\left(t_{2}, x^{k}_{2}, x^{-k}\right)\right\} + \overline{B}\left(t_{2},x_{1}^k, x^{-k}\right) < \overline{B}\left(t_{1}, x_{1}^k, x^{-k}\right)
\end{aligned}
\end{equation}
\begin{equation}
\begin{aligned}\label{L2}
& \max\left\{\underline{B}\left(t_{2}, x_{1}^k, x^{-k}\right),  \underline{B}\left(t_{1},x_{1}^k, x^{-k}\right) - \inf_{x_{1}^k \leq x^{k}_{2}}\left\{\overline{B}\left(t_{1}, x^{k}_{2}, x^{-k}\right) - \underline{B}\left(t_{2}, x^{k}_{2}, x^{-k}\right)\right\}\right\} \\ 
& = \underline{B}\left(t_{1},x_{1}^k, x^{-k}\right) - \inf_{x_{1}^k \leq x^{k}_{2}}\left\{\overline{B}\left(t_{1}, x^{k}_{2}, x^{-k}\right) - \underline{B}\left(t_{2}, x^{k}_{2}, x^{-k}\right)\right\} > \underline{B}\left(t_{2}, x_{1}^k, x^{-k}\right)
\end{aligned}
\end{equation}
\begin{equation}
\begin{aligned}\label{U2}
& \min\left\{\overline{B}\left(t_{2}, x_{1}^k, x^{-k}\right), \overline{B}\left(t_{1},x\right) - \sup_{x^{k}_{2} \leq x_{1}^k}\left\{\underline{B}\left(t_{1},x^{k}_{2}, x^{-k}\right) - \overline{B}\left(t_{2}, x^{k}_{2}, x^{-k}\right)\right\} \right\} \\ 
& = \overline{B}\left(t_{1},x_{1}^k, x^{-k}\right) - \sup_{x^{k}_{2} \leq x_{1}^k}\left\{\underline{B}\left(t_{1},x^{k}_{2}, x^{-k}\right) - \overline{B}\left(t_{2}, x^{k}_{2}, x^{-k}\right)\right\} < \overline{B}\left(t_{2}, x_{1}^k, x^{-k}\right)
\end{aligned}
\end{equation}
It is easy to show that $\eqref{L1} \Longrightarrow \neg \eqref{L2}$; thus, the lower bounds in \eqref{spmiv po 1} and \eqref{spmiv po 2} are consistent with \eqref{spmiv 1}. Similarly, $\eqref{U1} \Longrightarrow \neg \eqref{U2}$, and so the upper bounds in \eqref{spmiv po 1} and \eqref{spmiv po 2} are consistent with \eqref{spmiv 1}.
\end{proof}

\begin{proof}[Proof of proposition \ref{SPMIV 3}]

The proof of this result is analogous to that of proposition 9.2 of \cite{manski2003partial}, where he derived sharp bounds on average potential outcomes using monotone treatment response and a monotone instrumental variable. Addressing the first portion of the claim, proposition \ref{spm bounds 2} shows that 
\begin{equation}
\begin{gathered}
\underline{\Delta B}\left(t_{1},t_{2},x_{1}^k,x^{-k}\right)
\leq \E\left[\, y\left(t_{1}\right) - y\left(t_{2}\right) \mid x_{1}^k,x^{-k} \,\right] \leq 
\overline{\Delta B}\left(t_{1},t_{2},x^{k}_{1},x^{-k}\right)
\end{gathered}
\end{equation}
and \ref{SPMIV} implies that 
\begin{equation}
\begin{gathered}
\E\left[\, y\left(t_{1}\right) - y\left(t_{2}\right) \mid x_{2}^k,x^{-k} \,\right] \leq \E\left[\, y\left(t_{1}\right) - y\left(t_{2}\right) \mid x_{1}^k,x^{-k} \,\right] \leq \E\left[\, y\left(t_{1}\right) - y\left(t_{2}\right) \mid x_{3}^k,x^{-k} \,\right]
\end{gathered}
\end{equation}
when $x_{2}\leq x_{1} \leq x_{3}$. The result follows. The derivation is analogous for the second portion of the claim.

\end{proof}

\begin{proof}[Proof of proposition \ref{te si}]

For a lattice $T = \left\{t_{1}, t_{2}, t_{1}\vee t_{2}, t_{1}\wedge t_{2}\right\}$ which is not a chain, \ref{SPM} implies the following inequalities:
\begin{gather*}
y\left(t_{2}\right) - y\left(t_{1} \wedge t_{2}\right) \leq y\left(t_{1} \vee t_{2}\right) - y\left(t_{1}\right) \label{SPM ineq a} \\
y\left(t_{1}\right) - y\left(t_{1} \wedge t_{2}\right) \leq y\left(t_{1} \vee t_{2}\right) - y\left(t_{2}\right) \label{SPM ineq b} \\
\end{gather*}
Since these inequalities hold for all individuals, they imply the following first--order stochastic dominance relationships:
\begin{align*}
F_{t_{1}\vee t_{2},t_{1}}\left(w\right) & \leq F_{t_{2},t_{1}\wedge t_{2}}\left(w\right) \\
F_{t_{1}\vee t_{2},t_{2}}\left(w\right) & \leq F_{t_{1},t_{1}\wedge t_{2}}\left(w\right)
\end{align*}
These imply the following bounds on quantile functions:
\begin{align}
F_{t_{2}, t_{1}\wedge t_{2}}^{-1}\left(q\right) & \leq F_{t_{1}\vee t_{2}, t_{1}}^{-1}\left(q\right) \label{sd ineq 1} \\
F_{t_{1}, t_{1}\wedge t_{2}}^{-1}\left(q\right) & \leq F_{t_{1}\vee t_{2}, t_{2}}^{-1}\left(q\right) \label{sd ineq 2}
\end{align}
Combining \eqref{sd ineq 1} and \eqref{sd ineq 2} with the bounds from \cite{williamson1990probabilistic} yields the result.

\end{proof}

\begin{proof}[Proof of proposition \ref{qspmiv}]
Trivial.
\end{proof}

\end{section}

\end{document}